\documentclass[a4paper,UKenglish,cleveref, autoref, thm-restate]{lipics-v2021}

\usepackage{float}
\usepackage{thm-restate}
\usepackage{algorithm}
\usepackage{algpseudocode}

\usepackage{amsthm}
\usepackage{verbatim}
\usepackage{etoolbox}

\newtoggle{showproofs}
\settoggle{showproofs}{true}   

\iftoggle{showproofs}{
}{
  \renewenvironment{proof}{\comment}{\endcomment}
}

\algrenewcommand{\algorithmicrequire}{\textbf{Input:}}
\algrenewcommand{\algorithmicensure}{\textbf{Output:}}

\newcommand{\Dist}{\mathrm{Dist}}
\newcommand{\Post}{\mathrm{Post}}
\newcommand{\Pred}{\mathrm{Pred}}
\newcommand{\Nat}{\mathbb{N}}
\newcommand{\Prob}{\mathbb{P}}

\def\alain#1{\textit{\textcolor{blue}{AF: #1}}}

\def\raphael#1{\textit{\textcolor{red}{RF: #1}}}

\def\lina#1{\textit{\textcolor{purple}{LY: #1}}}

\def\gaspard#1{\textit{\textcolor{brown}{GF: #1}}}

\title{Quantitative coverability for probabilistic well-structured transition systems} 

\titlerunning{Quantitative coverability for probabilistic WSTS} 

\author{Raphaël {Faure}}{Universit\'e Paris-Saclay,  CNRS, 
ENS Paris-Saclay, Laboratoire Méthodes Formelles, 
91190, Gif-sur-Yvette, France \and \url{https://home.lmf.cnrs.fr/RaphaelFaure} }{raphaelfaure@lmf.cnrs.fr}{https://orcid.org/0009-0001-1292-9090}{}

\author{Alain {Finkel}}{Universit\'e Paris-Saclay, CNRS, 
ENS Paris-Saclay, Laboratoire Méthodes Formelles,  
91190, Gif-sur-Yvette, France \and \url{https://ens-paris-saclay.fr/alain-finkel/} }{alain.finkel@ens-paris-saclay.fr}{https://orcid.org/0000-0003-2482-6141}{}

\author{Gaspard {Fougea}}{Universit\'e Paris-Saclay,  CNRS, 
ENS Paris-Saclay, Laboratoire Méthodes Formelles, 
91190, Gif-sur-Yvette, France \and \url{https://lmf.cnrs.fr/Perso/GaspardFougea} }{gaspardfougea@lmf.cnrs.fr}{https://orcid.org/0009-0004-8357-5340}{}

\author{Lina {Ye}}{Universit\'e Paris-Saclay, CNRS, 
ENS Paris-Saclay, CentraleSupélec, Laboratoire Méthodes Formelles,  
91190, Gif-sur-Yvette, France \and \url{https://home.lmf.cnrs.fr/LinaYe/} }{lina.ye@universite-paris-saclay.fr}{https://orcid.org/0000-0002-2217-4752}{}

\authorrunning{R. Faure et al.} 

\Copyright{} 

\keywords{well-structured transition systems, infinite Markov chain, decisiveness} 
\category{} 

\relatedversion{}

\acknowledgements{Alain Finkel thanks Jean Goubault-Larrecq and Serge Haddad for initial discussions on probabilistic WSTS.}

\begin{document}

\nolinenumbers

\maketitle

\begin{abstract}
    Well-structured transition systems (WSTS) provide a classical framework for
the verification of infinite-state systems, but their probabilistic
extensions lack a unified treatment of quantitative coverability:
path-enumeration algorithms assume a finite branching degree, while
alternative approximation schemes defer some computations, such as
probabilities over a bounded horizon, to the model at hand. We introduce
probabilistic well-structured transition systems (pWSTS), Markov chains over
countable state sets whose underlying transition systems are WSTS, with no \emph{a
priori} assumption on the branching degree. This class encompasses any WSTS equipped 
with a Markov kernel, such as probabilistic vector addition systems (pVAS) 
and probabilistic lossy channel systems (pLCS). For an effective subclass, we solve
the approximate quantitative coverability problem over bounded horizons, and
over infinite horizons under decisiveness, requiring no probabilistic information beyond individual transition probabilities. We then identify a general source of decisiveness: every stochastically
monotone pWSTS is decisive with respect to every upward-closed set. We finally instantiate the framework on multi-type Galton--Watson processes,
a classical model of population dynamics whose offspring distributions may
have infinite support. Under mild assumptions on the reproduction laws, these processes 
are effective pWSTS, and they are stochastically monotone, hence decisive. Approximate 
quantitative coverability is therefore computable for them over both horizons, with a
proof that uses none of the traditional tools: neither generating functions
nor any case distinction between regimes.
\end{abstract}

\section{Introduction}
\label{sec:intro}

\textbf{General context.} 
Well-structured transition systems (WSTS)
\cite{Finkel87,FinkelSchnoebelen,AbdullaCeransJonssonTsay2000}
form a classical framework for the verification of infinite-state systems.
They combine a well-quasi-order with a monotone transition relation, yielding
finite representations of upward-closed sets and, under suitable effectiveness
assumptions, backward algorithms for coverability. Prominent examples include Petri nets/vector addition systems (VAS), lossy channel systems (LCS), and several classes of parameterized protocols. 

Many probabilistic infinite-state models whose underlying transition systems are WSTS have nevertheless been studied separately. Examples include probabilistic VAS (pVAS) and probabilistic LCS (pLCS).

Algorithms for approximating reachability probabilities do apply to these
models, but each line of work carries its own trade-off:
path-enumeration algorithms assume a finite branching degree, while alternative approximation schemes, which do handle infinitely
branching systems, defer some computations to the model at hand, such as the probabilities over a bounded horizon, the horizon being
the number of steps over which the process is observed (it may be infinite,
in which case runs of any length are considered).  
What is missing is an approximation scheme that does not depend on the
branching degree, and that queries the model for individual transition
probabilities rather than for bounded-horizon ones. Such a scheme would leave to each model only the verification of a fixed set
of effectiveness conditions, the approximation being then carried out
uniformly over the class. This motivates a unified probabilistic WSTS framework for the
probabilistic versions of these questions, coverability first among them.

\textbf{Related work and gap.}
Several approaches study probabilistic infinite-state systems and the
approximate quantitative reachability problem. To tackle this problem,
decisiveness emerged as a central property guaranteeing that, with probability
one, a run eventually reaches either a target or a state from which that target
is unreachable~\cite{AbdullaBenHendaMayr2007}.

Under suitable effectiveness assumptions, decisiveness supports approximation
algorithms for reachability probabilities, among which the classical ones, based on enumerating paths, typically rely on finite branching~\cite{IyerNarasimha1997,AbdullaBenHendaMayr2007}. Moreover,
\cite{bertrand_et_al:LIPIcs.ICALP.2016.101} generalizes path-enumeration to a
larger class of stochastic processes, allowing for continuous state set and
uncountable branching. Their approximation scheme, however, takes the
bounded-horizon probabilities as given, their computation being deferred to the
model at hand. This is not a trivial requirement: since the branching may be
infinite, path enumeration cannot be used to obtain these quantities, and the
model must supply them by other means, typically through sound abstractions
such as region graphs. Our framework restricts state sets to countable ones,
but asks nothing of the model beyond individual transition probabilities.

A complementary line of work exploits transience and truncation. Divergence
offers a treatment of transient systems by identifying regions that can be
safely truncated~\cite{FinkelHaddadYeDynamic2023}. The effectiveness results
there, however, are established for finitely branching systems.

More recently, Aspnes~\cite{AspnesSWSTS} took a step towards a unified
framework with \emph{stochastic well-structured transition systems} (SWSTS),
aimed at distributed computing models such as population protocols and
chemical reaction networks. An SWSTS is a WSTS equipped with a Markov kernel
and a weight function on states, in which every transition can be matched from
any larger state with probability at least inverse-polynomial in the weight of
the current state. When the system is \emph{closed}, that is when every
transition preserves the weight, the expected time before a run reaches a
given upward-closed set or a state from which it is unreachable is polynomial
in the initial weight. Without closedness the conclusion may fail, as a
non-closed chemical reaction network shows~\cite[Section~5.3.1]{AspnesSWSTS}.
More fundamentally, the weight function is an extra quantitative requirement,
induced neither by the well-quasi-order nor by the kernel. Moreover, WSTS models in which objects can be added or consumed generally do not satisfy the closed-case condition with their natural weight. 
This is the case for pLCS: with the channel
contents' length as natural weight, both sending and losing a message change
it.

These approaches do not provide a general WSTS-level framework yielding
effective approximation schemes for coverability probabilities where the
branching degree is left completely unspecified and where the only
probabilities the model has to supply are those of individual transitions. We
identify structural conditions under which an \(\varepsilon\)-approximation
scheme can be established for any upward-closed target set and any initial state. 

\textbf{Approximate Quantitative Coverability Problem.} Given an effectively presented WSTS with a Markov kernel, a finite basis \(B\), an
initial state \(x\), a horizon \(\ell\in\Nat\cup\{\infty\}\), and a precision
\(\varepsilon\), the problem is to compute the probability of reaching the upward closure of $B$ from $x$ within $\ell$ steps up to precision $\varepsilon$. We
consider both bounded and infinite horizons.\\

\textbf{Contributions.}
\begin{romanenumerate}
\item We define probabilistic WSTS (pWSTS), a model class encompassing both WSTS equipped with
a Markov kernel and stochastic processes whose underlying transition systems
are WSTS, with no a priori bound on the branching degree.  

\item For an effective subclass, we solve the approximate quantitative coverability problem over bounded horizons, and over infinite horizons under decisiveness, without assuming finite branching or bounded-horizon probabilities. Infinite branching is handled by successor enumeration guided by the remaining unexplored probability mass, computed solely from transition probabilities.

\item We show that multi-type Galton Watson processes, a classical model of
population dynamics with potentially unbounded offspring distributions, are
pWSTS. Furthermore, we establish its Pred-Basis effectiveness\footnote{Given a finite basis B of an upward-closed set, one can effectively compute a finite basis of its predecessor set.}, 
under mild assumptions on their
reproduction laws. 
\item We prove that stochastically monotone pWSTS are decisive with respect to every upward-closed set, and that multi-type Galton--Watson processes are stochastically monotone. To the best of our knowledge, this is the first pWSTS treatment of these processes, yielding a generic procedure for approximate quantitative coverability over both bounded and infinite horizons.
\end{romanenumerate}

\noindent
\textbf{Roadmap.}
Sections~\ref{sec:preli} and~\ref{sec:pWSTS} provide the necessary background and introduce the pWSTS framework along with its effectiveness assumptions. Section~\ref{sec:Path-Enumeration} presents the extended path-enumeration algorithms, while Section~\ref{sec:applications}
 details the application to multi-type Galton--Watson processes. Finally, Section~\ref{sec:ccl}
 concludes the paper, and a statement on the use of AI
follows.


\section{Preliminaries}
\label{sec:preli}
This section establishes the preliminary concepts and notation used throughout the paper. It is divided into two parts: the first reviews well-structured transition systems (WSTS), and the second recalls fundamental notions of Markov chains to formally define our probabilistic framework.

\subsection{WSTS}

For a non-empty set $S$, a \emph{quasi-order} (qo) $\leq$ is a binary relation on $S$ that is reflexive and transitive. $(S, \le)$ is a \emph{well-quasi-order} (wqo) if every infinite sequence $(x_i)_i \in S^\Nat$ contains two elements $x_i \le x_j$ with $i<j$. A transition relation $\to \subseteq S \times S$ on a qo $(S, \le)$ is \emph{strongly monotone} if for all $x,y,x'\in S$, $(x\leq x' \ \wedge\ x\to y)$ $\Longrightarrow (\exists y'\in S:\ x'\to y' \ \wedge\ y\leq y').$ 

\begin{definition}[Well-Structured Transition System \cite{Finkel87, FinkelSchnoebelen, AbdullaCeransJonssonTsay2000}]
\label{def:wsts}
A \emph{Well-Structured Transition System with strong monotonicity} (WSTS) is a tuple $\mathcal{W}=(S,\le,\to)$, where
\begin{itemize}
    \item $S$ is a set of states;
    \item $\le$ is a well-quasi-order;
    \item $\to \subseteq S\times S$ is strongly monotone.
\end{itemize}
\end{definition}

Several closely related, but not totally equivalent, definitions appear in the literature under the name of Well-Structured Transition System. They may differ in particular in the monotonicity requirement (see \cite{FinkelSchnoebelen} for a presentation of the different monotonicities). In this paper, we use the term WSTS in the strong monotonicity sense above.

Let $\mathcal{S}=(S,\leq,\rightarrow)$ be a WSTS. A \emph{run} is a sequence of transitions 
$s_0 \rightarrow s_1 \rightarrow \cdots$,
where each consecutive pair is related by $\rightarrow$. Runs are also called \emph{paths} in the literature, whence the standard
term \emph{path-enumeration algorithms}, which we keep. A finite run of
length $m$ from $s$ to $t$ is written $s \rightarrow^{m} t$, with the
convention that $s \rightarrow^{0} s$. We write $\rightarrow^{*}$ for the
reflexive transitive closure of $\rightarrow$. In the sequel, a run $s_0 \rightarrow s_1 \rightarrow \cdots$ is denoted by $s_0s_1...$ .

For a set $X \subseteq S$, we define the predecessor operators:
\begin{itemize}
  \item for $m\in \Nat$, $\Pred^m(X) = \{\, s \in S \mid \exists\, t \in X,\ s \to^m t \,\}$, in particular $\Pred^0(X) = X$ and $\Pred^1(X)$ is written $\Pred(X)$;
   \item for $\ell\in \Nat$, $\Pred^{\leq \ell}(X) = \bigcup_{0 \leq m \leq \ell} \Pred^m(X)$ and
  $\Pred^*(X) = \bigcup_{m \geq 0} \Pred^m(X)$. 
\end{itemize}
The successor operators $\Post^m(X)$, $\Post^{\le \ell}(X)$, and $\Post^{*}(X)$ are defined analogously. For $x \in S$,
we write $\Pred(x)$ for $\Pred(\{x\})$ and $\Post(x)$ for $\Post(\{x\})$, and likewise for the iterated operators. 

For $X\subseteq S$, we write $\uparrow X=\{y\in S \mid \exists x\in X,\ x\le y\}$ for its upward-closure, and $\uparrow x=\uparrow\{x\}$ for $x\in S$. A set $U\subseteq S$ is said to be \emph{upward-closed} if $U=\uparrow U$. By strong monotonicity of the transition relation, if $X$ is upward-closed, then $\Pred(X)$ is upward-closed, and therefore so are $\Pred^m(X)$ for all $m\in\mathbb{N}$ and $\Pred^*(X)$. A fundamental well-known property of well-quasi-orders~\cite{Higman1952} is that every upward-closed subset $U\subseteq S$ admits a finite basis, i.e., there exists a finite set $B\subseteq U$ such that $U=\uparrow B$.

\begin{lemma}\cite{AbdullaCeransJonssonTsay2000}
\label{lem:finite-depth}
Let $\mathcal S=(S,\le,\to)$ be a WSTS and let $U\subseteq S$ be upward-closed.
Then there exists $\ell\in\Nat$ such that 
$\Pred^*(U)=\Pred^{\le \ell}(U)$.

\end{lemma}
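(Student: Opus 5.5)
We plan to use the standard ascending-chain argument for upward-closed sets in a wqo. Consider the sequence of sets $U_\ell := \Pred^{\le \ell}(U)$ for $\ell \in \Nat$. By the remark preceding the statement, each $\Pred^m(U)$ is upward-closed (strong monotonicity makes $\Pred$ preserve upward-closedness, and $U$ is upward-closed). A union of upward-closed sets is upward-closed, so every $U_\ell$ is upward-closed. Moreover $U_0 \subseteq U_1 \subseteq U_2 \subseteq \cdots$ is an increasing chain, and $\Pred^*(U)=\bigcup_{\ell} U_\ell$.

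The key step is to show that in a wqo every increasing chain of upward-closed sets stabilises. We argue by contradiction. Suppose the chain is strictly increasing infinitely often; after passing to the indices where it grows, pick $x_k \in U_{\ell_{k}} \setminus U_{\ell_{k-1}}$ with $\ell_0<\ell_1<\cdots$. By the wqo property there are $i<j$ with $x_i \le x_j$. Since $x_i \in U_{\ell_i} \subseteq U_{\ell_{j-1}}$ and $U_{\ell_{j-1}}$ is upward-closed, we get $x_j \in U_{\ell_{j-1}}$, a contradiction. Alternatively, one can use the finite-basis fact recalled above: $\Pred^*(U)$ is upward-closed, so it has a finite basis $B$. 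Each $b\in B$ lies in some $U_{\ell_b}$. Taking $\ell=\max_b \ell_b$ (finite since $B$ is finite) gives $B\subseteq U_\ell$, hence $\Pred^*(U)=\uparrow B\subseteq U_\ell\subseteq \Pred^*(U)$. I would present this second version, since it is shorter and relies only on facts already stated in the paper.

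The only point needing care is that $\Pred^{\le \ell}(U)$ is upward-closed. This follows from the upward-closedness of each $\Pred^m(U)$, which the paper derives from strong monotonicity: if $s\to t\in X$ and $s\le s'$, then $s'\to t'$ with $t\le t'$, so $t'\in X$. Iterating this over $m$ is routine induction. No real obstacle is expected; the argument is purely order-theoretic and does not need any effectiveness of the basis.
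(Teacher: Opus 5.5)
Your proof is correct and is the standard argument behind this result; the paper gives no proof of its own and only cites Abdulla et al. The sets $\Pred^{\le \ell}(U)$ form an increasing chain of upward-closed sets whose union $\Pred^*(U)$ has a finite basis, so the chain stabilises once every basis element has appeared. The one detail to add to your second version is the degenerate case $\Pred^*(U)=\emptyset$: there the basis is empty, the maximum is over an empty set, and you simply take $\ell=0$.
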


Furthermore, let us recall the notion of Pred-Basis effectiveness. 

\begin{definition}\cite{AbdullaCeransJonssonTsay2000}
\label{def:effective-wsts}
A WSTS
$(S,\le,\to)$ is \emph{Pred-Basis effective} if
$\le$ is decidable, and, 
for every finite set $B$, a finite basis of $\mathrm{Pred}(\uparrow B)$ can be computed. 
\end{definition}

For a Pred-Basis Effective WSTS and a finite set \(B\),
Lemma~\ref{lem:finite-depth} makes it possible to compute a finite basis of
\(\Pred^*(\uparrow B)\) using the backward coverability
algorithm~\cite{Abdulla_1996,FinkelSchnoebelen}.

\subsection{Markov Chains}

This section formally defines the probabilistic framework underlying our approach. For any set $S$, we denote by $2^S$ its power set, equipped as a $\sigma$-algebra.

\begin{definition}
    Let $S$ be a countable set. We denote by $\Dist(S)$ the set of discrete probability distributions over $2^S$.
\end{definition}
The support of a distribution $\mu \in \Dist(S)$ is defined as
$\mathrm{supp}(\mu) = \{ t \in S \mid \mu(\{t\}) > 0 \}$. To lighten notation, we write $\mu(t)$ instead of $\mu(\{t\})$ for all $t \in S$ and $\mu \in \Dist(S)$.

\begin{definition}
Let $S$ be a countable set. A \emph{Markov kernel on $S$} is a mapping $P: S \to \Dist(S)$.
\end{definition}

Let $(S, P)$ be a countable set equipped with P a Markov kernel. In this paper, we let $(X_n)_{n\ge 0}$ denote the Markov chain on $S$
induced by the kernel $P$, i.e., with transition probabilities
$\mathbb P(X_{n+1}=t\mid X_n=s)=P(s)(t)$. Moreover, for any initial state $x \in S$, the Ionescu--Tulcea Theorem~\cite{IonescuTulcea1949,Kallenberg2021} guarantees the existence of a unique probability measure $\mathbb{P}_x$ on the space $\Omega = S^{\mathbb{N}}$ of infinite runs. In this setting, the initial state is $X_0 = x$, each $X_k \colon \Omega \to S$ denotes the $k$-th coordinate projection (i.e., $X_k((x_i)_{i \geq 0}) = x_k$), and the process is equipped with the natural filtration $(\mathcal{F}_n)_{n \geq 0}$ defined by $\mathcal{F}_n = \sigma(X_0, \dots, X_n)$, and $\mathcal{F} =
\sigma(X_k: k \geq 0)$ the product $\sigma$-algebra on $\Omega$.

We recall the definitions of stopping and hitting times: 

\begin{definition}
\label{def:stopping-time}
A random variable $\tau \colon \Omega \to
\mathbb{N} \cup \{\infty\}$ is a \emph{stopping time} with respect to $(\mathcal{F}_n)_{n \geq 0}$ if
for every $n \geq 0$, $\{\tau \leq n\} \in \mathcal{F}_n$ .
\end{definition}

\begin{definition}
\label{def:hitting-return}
For a subset $A \subseteq S$, the \emph{hitting
time} $\tau_A$ of A is defined by
$
  \tau_A = \inf\{n \geq 0: X_n \in A\},
$
with the convention $\inf \emptyset = \infty$. 
\end{definition}

It is a stopping time with respect to $(\mathcal{F}_n)_{n \geq 0}$:
for every $n \geq 0$, the events $\{\tau_A \leq n\} = \bigcup_{k \leq n}
\{X_k \in A\}$ belong to $\mathcal{F}_n$.  Decisive Markov chains were introduced by Abdulla et al.~\cite{AbdullaBenHendaMayr2007}, and further studied in~\cite{FinkelHaddadYeDynamic2023,BarbotBouyerHaddad2024, bertrand_et_al:LIPIcs.ICALP.2016.101}.  
In this paper, for any $A \subseteq S$, we denote by $\widetilde A$ the set $(S \setminus \Pred^*(A))$ of states from which $A$ is no longer reachable.

\begin{definition}[Decisiveness]
The Markov chain induced by $(S,P)$ is \emph{decisive with respect to $A$} if,
from every initial state, almost every run eventually reaches either $A$ or $\widetilde A$.
\end{definition}

Let us remark that the Markov chain induced by $(S, P)$ is decisive w.r.t. $A$ if and only if $\Prob_x(\tau_{A\cup\widetilde A} < \infty) = 1$, for all $x\in S$. 

\section{Probabilistic WSTS and Stochastically Monotone pWSTS}
\label{sec:pWSTS}
From now on, all state sets are assumed to be countable. In this paper, we use the term \emph{procedure} for a computational scheme without guaranteed termination, \emph{semi-algorithm} when termination is ensured under specific additional conditions, and \emph{algorithm} when it is guaranteed for any input.
\subsection{Probabilistic WSTS}

Probabilistic models built over WSTS have been studied model by model:
stochastic Petri nets \cite{Molloy82}, pVASS
\cite{AbdullaBenHendaMayr2007,BenHenda08,AbdullaBenHendaMayrSandbergATVA2006,AbdullaBenHendaMayrSandbergQEST2006},
probabilistic lossy channel systems (pLCS)
\cite{IyerNarasimha1997,Schnoebelen2004,ABDULLA2005141,BenHenda08,AbdullaBenHendaMayrSandbergATVA2006,AbdullaBenHendaMayrSandbergQEST2006}.
We define below the class they all belong to, asking only that the
transition system underlying a Markov chain be a WSTS, without the
quantitative conditions imposed by SWSTS \cite{AspnesSWSTS}. We recall that $x \in S$ is a deadlock if $\Post(x) = \emptyset$.

\begin{definition}[Probabilistic WSTS]\label{def:pWSTS}
A \emph{probabilistic well-structured transition system (pWSTS)} is a pair $(\mathcal{S}, P)$, where $\mathcal{S} = (S, \leq, \to)$ is a WSTS and $P$ is a Markov kernel on $S$ compatible with the transition relation: \begin{itemize}
    \item for any non-deadlock state $x \in S$ and all $y \in S$, $P(x)(y) > 0 \Longleftrightarrow x \to y$;
    \item whereas for any deadlock state $x \in S$ , $P(x)(x) = 1$. 
\end{itemize} 
\end{definition}

In the literature, deadlocks are typically handled in two ways: either by
adding a self-loop in the transition system, or by making them absorbing
states with probability $1$ of remaining there. The first option is not
available here: adding $x \to x$ at a deadlock $x$ requires, by
monotonicity, that every $x' \geq x$ have a successor above $x$, which
nothing guarantees. We therefore model deadlocks as absorbing states, which
leaves the underlying WSTS untouched. Note that for any $x \in S$ such that $x$ is not a deadlock state, $\Post(x) = \mathrm{supp}(P(x))$. Before proceeding, we present two simple examples.





\begin{example}
\label{ex:one-counter}
Consider the system $S_1 = (\Nat,\leq,\to)$ whose state set is $\Nat$ with the standard order, having the following transition relation:
\begin{itemize}
    \item the increment step: for all $n\in \mathbb{N}$, $n \to n+1$;
    \item the decrement step: for all $n\in \mathbb{N}_+$, $n \to n-1$.
\end{itemize}
For $n\in\Nat_+$, we assign a probability of $1/2$ to each possible transition from $n$. At $0$, a probability of $1$ is assigned to the increment transition. This system is a pWSTS.

\end{example}
\begin{example}
\label{ex:infinite-jumps}
Consider the system $S_2 = (\Nat, \leq, \to)$ whose state set is $\Nat$
with the standard order, having the following transition relation:
\begin{itemize}
    \item the jump step: for all $n \in \Nat$ and all $k \in \Nat_+$,
    $n \to n+k$;
    \item the decrement step: for all $n \in \Nat_+$, $n \to n-1$.
\end{itemize}
For $n \in \Nat_+$, we assign the probability $1/2$ to the decrement
transition and $2^{-(k+1)}$ to each jump of size $k$. At $0$, where no
decrement is enabled, we assign $2^{-k}$ to each jump of size $k \in \Nat_+$. In both
cases the probabilities sum to one, $\le$ is a wqo and the transition relation is monotone,
so $S_2$ is a pWSTS. Its branching degree, however, is infinite: for $n \in \Nat_+$, $\Post(n)$
is the whole of $\{n-1\} \cup \{n+k \mid k \in \Nat_+\}$.
\end{example}

Let us recall the definition of pVAS and pLCS.

\begin{definition}[VAS]\label{def:VAS}
    Let $d \in \Nat_+$, a $d$-VAS (or a VAS of dimension $d$) is a finite set $V\subseteq \mathbb {Z} ^{d}$.
\end{definition}

Let $d \in \Nat_+$ and let $V\subseteq \mathbb {Z} ^{d}$ be a d-VAS. For $u \in \mathbb N^d$ and $v\in V$, we write $u \xrightarrow{v} u+v$ whenever $u+v\in\mathbb N^d$.

Observe that system $S_1$ is a $1$-VAS whereas $S_2$ is equivalent to an $\omega$-Petri net ($\omega$-Petri nets \cite{GeeraertsHPR15} are an infinite branching extension of Petri nets).

Moreover, let $w: V \to \mathbb{N}^+$ be a weight function. We define
transition probabilities as follows: for every configuration $u \in
\mathbb{N}^d$, let $\mathrm{En}(u) = \{v \in V: u + v \in \mathbb{N}^d\}$
denote its set of enabled actions. If $\mathrm{En}(u) \neq \emptyset$, then
for $v \in \mathrm{En}(u)$,
$$
  P(u)(u + v) = \frac{w(v)}{\sum_{v' \in \mathrm{En}(u)} w(v')},
$$
and $P(u)(x) = 0$ for every other state $x$. If $\mathrm{En}(u) =
\emptyset$, then $u$ is a deadlock and we set $P(u)(u) = 1$. Thus, $P$ defines a Markov kernel, and a d-VAS equipped with it is a pWSTS, since a d-VAS is a WSTS.

\begin{definition}[pLCS]
\label{def:pLCS}
A pLCS is a tuple $\mathcal{S} = (Q, \Sigma, \Delta, W, p_\ell)$ where:
\begin{itemize}
    \item $Q$ is a finite set of control states;
    \item $\Sigma$ is a finite alphabet of messages; 
    \item $\Delta \subseteq Q \times (\{!a, ?a \mid a \in \Sigma\} \cup \{\tau\}) \times Q$ is a finite set of transition rules; 
    \item $W: \Delta \to \mathbb{N}_+$ assigns a positive integer weight to each transition rule;
    \item $p_\ell \in (0,1)\cap \mathbb{Q}$ is a message loss probability.
\end{itemize}
\end{definition}

The configuration is a pair $(q,w)\in Q\times\Sigma^*$, where $w$ represents the contents of the FIFO channel. The state set is therefore $S = Q \times \Sigma^*$. We equip $\Sigma^*$ with the subword ordering relation $\sqsubseteq$, which is extended to configurations by $(q,w)\sqsubseteq(q',w') \iff q=q'$ and $w\sqsubseteq w'$. By Higman's Lemma, this ordering is a well-quasi-order on $S$. A transition is defined as, for $(q,w),(q',w') \in Q\times\Sigma^*$, $(q,w) \to (q',w')$ if and only if there exists $w'' \in \Sigma^*$ such that $(q,w) \to_{loss} (q,w'')$ and there exists $t \in (\{!a, ?a \mid a \in \Sigma\} \cup \{\tau\} )$ such that: $(q,t,q') \in \Delta$ and $(q,w'') \xrightarrow{t} (q',w')$. Where: \begin{itemize}
    \item $\to_{loss}$: for $(q,w),(q',w') \in Q\times\Sigma^*$, $(q,w) \to_{loss} (q',w')$, if and only if: $q = q'$ and $w' \sqsubseteq w$. This transition rule represents the loss of messages in the FIFO channel. Following \cite{ABDULLA2005141}, each message can be lost independently with probability $p_l$.
    \item for $q,q' \in Q$ and $a \in \Sigma$, $q \xrightarrow{!\,a} q'$ (\emph{emission}): always enabled, it appends the word $a$ to the tail of the channel, and move to control state $q'$;
  \item for $q,q' \in Q$, $a \in \Sigma$ and $w \in \Sigma^*$, $q \xrightarrow{?\,a} q'$ (\emph{reception}): enabled iff $w$
  starts with $a$; it removes the head of the channel and move to $q'$;
  \item for $q,q' \in Q$, $q \xrightarrow{\tau} q'$ (\emph{internal action}): always enabled;
  it moves to $q'$ without modifying the channels.
\end{itemize}

We assume that pLCS are deadlock-free, a property easily achieved by adding a self-loop $\tau$-transition to each control state. As shown in \cite{ABDULLA2005141}, pLCS can be extended with transition probabilities using the function $W$ and the parameter $p_\ell$. This probabilistic extension yields a Markov chain, which can formally be viewed as a pWSTS since the underlying lossy channel system with \emph{loss-action} semantics is already a WSTS. Note that the pLCS framework can also accommodate multiple channels.

\begin{proposition}
    Both pVAS and pLCS are pWSTS.
\end{proposition}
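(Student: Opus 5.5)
The plan is to verify, for each model, the three requirements of Definition~\ref{def:pWSTS}: countability of the state set, that the underlying transition system is a WSTS in the strong-monotonicity sense of Definition~\ref{def:wsts}, and compatibility of the kernel $P$ with $\to$, with deadlocks absorbing.

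\textbf{pVAS.} The state set $\Nat^d$ is countable. Under the componentwise order it is a wqo by Dickson's lemma. For strong monotonicity, suppose $u\le u'$ and $u\xrightarrow{v}u+v$. Then $u'+v\ge u+v\ge 0$, so $v\in\mathrm{En}(u')$. Hence $u'\to u'+v$ with $u+v\le u'+v$.

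For the kernel, $P(u)$ is a probability distribution because the weights are positive and $\mathrm{En}(u)$ is finite. If $\mathrm{En}(u)\neq\emptyset$, then $P(u)(y)>0$ holds iff $y=u+v$ for some $v\in\mathrm{En}(u)$, which is exactly $u\to y$. Distinct $v$ giving the same $y$ only add up positive weights, so this causes no problem. A state $u$ is a deadlock iff $\mathrm{En}(u)=\emptyset$, and in that case $P(u)(u)=1$ by definition.

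\textbf{pLCS.} The state set $Q\times\Sigma^*$ is countable. It is a wqo because the product of the finite equality order on $Q$ with the subword order $\sqsubseteq$ is a wqo by Higman's lemma.

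Strong monotonicity follows from the loss-action semantics. Let $(q,w)\sqsubseteq(q,w_1)$ and $(q,w)\to(q',w')$ via some $w''\sqsubseteq w$ followed by a rule $t$. Since $w''\sqsubseteq w\sqsubseteq w_1$, the state $(q,w_1)$ can first lose messages to reach exactly $(q,w'')$. It then fires the same $t$ and reaches exactly $(q',w')$. So strong monotonicity holds, even with equality of the successors. There is no deadlock because of the added $\tau$ self-loops.

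The remaining check is that the kernel of \cite{ABDULLA2005141} has support exactly $\mathrm{Post}(q,w)$. I expect this to be the main obstacle, since it depends on the precise probabilistic semantics.

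The loss phase gives every subword $w''\sqsubseteq w$ positive probability: each of the finitely many erasure patterns has probability $p_\ell^k(1-p_\ell)^{|w|-k}>0$, because $p_\ell\in(0,1)$. The rule is then chosen among the enabled rules in proportion to their positive weights $W$. Summing over the finitely many ways to obtain each $(q',w')$, the probability of $(q',w')$ is positive iff some pair (loss pattern, enabled rule) produces it. That is precisely the definition of $(q,w)\to(q',w')$. The total mass is $1$ because there is at least one enabled rule.

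The same argument works with several channels, using a product of subword orders.
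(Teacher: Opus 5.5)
Your proposal is correct and takes the paper's route. The paper justifies the proposition only in the surrounding text, by noting that VAS and LCS with loss-action semantics are WSTS and that the weight-based kernels are compatible with $\to$ by construction; you make explicit the Dickson/Higman arguments, the strong-monotonicity checks and the support computation that the paper leaves implicit (your remark about distinct $v$ yielding the same $y$ is vacuous, since $u+v=u+v'$ forces $v=v'$).
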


For models such as pVAS or pLCS, one starts from a WSTS and equips it with
probabilities. The reverse construction is also possible: from a Markov
chain on $S$, set $x \to y$ iff $P(x)(y) > 0$. Although the induced transition system is not necessarily a WSTS, when it is, the model falls within our framework, like the Galton--Watson processes of Section~\ref{sec:applications}.

\subsection{Stochastically Monotone pWSTS}
Recall that for $x \in S$ and $A \subseteq S$, $P(x)(A) = \sum_{y \in A} P(x)(y)$. Building on this, we use the concept of stochastic monotonicity, first introduced in~\cite{Daley1968StochasticallyMM} and later extended to partial orders in~\cite{KamaeKrengelOBrien1977}.

\begin{definition}
\label{def:stochastic-monotonicity}
A pWSTS $(\mathcal{S}, P)$ is \emph{stochastically monotone} if for all
$x, x' \in S$ with $x \leq x'$ and every upward-closed set $U \subseteq S$,
$$
  P(x)(U) \;\leq\; P(x')(U).
$$
\end{definition}

\begin{remark}
  We can define a quasi-order $\le_{st}$ over $\Dist(S)$ such that $\mu \le_{st} \nu$ if and only if for every upward-closed set $U \subseteq S$, $\mu(U) \;\leq\; \nu(U)$. For a stochastically monotone pWSTS, the relation $\le_{st}$ is a wqo over the image $P(S)$.
\end{remark}

A prominent example of a stochastically monotone pWSTS is the Galton--Watson process defined in Section~\ref{sec:applications}. In this subsection, we show that any stochastically monotone pWSTS is decisive with respect to any upward-closed set.

\begin{lemma}
\label{lem:monotone-reachability}
Let $(\mathcal{S}, P)$ be a stochastically monotone pWSTS and let
$U \subseteq S$ be upward-closed. Then for every $k \in \Nat$, the map
$f_k: x \mapsto \mathbb{P}_x(\tau_U \leq k)$ is non-decreasing.
\end{lemma}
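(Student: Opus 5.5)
We argue by induction on $k$. The invariant is that $f_k$ is non-decreasing and that its superlevel sets $\{x : f_k(x) > c\}$ and $\{x : f_k(x) \geq c\}$ are upward-closed for every real $c$. These two properties are equivalent, since a function is non-decreasing if and only if all its superlevel sets are upward-closed.

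For $k=0$ we have $f_0 = \mathbf 1_U$. This is non-decreasing because $U$ is upward-closed.

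For the inductive step we use the Markov property at time $1$. It gives the one-step recursion
\[
  f_{k+1}(x) = \mathbf 1_U(x) + \mathbf 1_{S\setminus U}(x)\sum_{y\in S} P(x)(y)\, f_k(y).
\]
Define $g_k(x) = \sum_y P(x)(y) f_k(y) = \mathbb E[f_k(X_1)\mid X_0=x]$, and note that $f_k(x) \le g_k(x)$ is not needed. The point is this: since $0\le f_k\le 1$ and $f_k = 1$ on $U$, we get $f_{k+1}(x) = \max(\mathbf 1_U(x), g_k(x))$. Indeed, for $x\in U$ both sides equal $1$, and for $x\notin U$ both equal $g_k(x)$. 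The maximum of two non-decreasing functions is non-decreasing. So it suffices to show that $g_k$ is non-decreasing.

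\textbf{Main step: $g_k$ is non-decreasing.} We use the layer-cake representation of the bounded non-negative function $f_k$:
\[
  g_k(x) = \sum_y P(x)(y)\int_0^1 \mathbf 1[f_k(y) > c]\,dc = \int_0^1 P(x)(U_c)\,dc,
  \qquad U_c = \{y : f_k(y) > c\}.
\]
Exchanging sum and integral is justified by Tonelli, since everything is non-negative. By the induction hypothesis each $U_c$ is upward-closed. Stochastic monotonicity (Definition~\ref{def:stochastic-monotonicity}) then gives $P(x)(U_c)\le P(x')(U_c)$ whenever $x\le x'$. Integrating over $c$ yields $g_k(x)\le g_k(x')$.

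We expect this step to be the only delicate point. Stochastic monotonicity is stated only for indicators of upward-closed sets, and the layer-cake decomposition is what extends it to arbitrary bounded non-decreasing functions. This is exactly why the induction must carry the upward-closedness of the superlevel sets, which follows from monotonicity of $f_k$.

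Two side remarks. First, deadlock states need no special treatment: the definition of stochastic monotonicity applies to $P(x)$ for every $x$, including the absorbing kernel $P(x)(x)=1$. Second, the recursion itself is the standard first-step decomposition under $\mathbb P_x$, obtained from the Ionescu--Tulcea construction. It uses $\{\tau_U\le k+1\} = \{X_0\in U\}\cup\bigl(\{X_0\notin U\}\cap\{\tau_U\circ\theta_1\le k\}\bigr)$, where $\theta_1$ is the shift.
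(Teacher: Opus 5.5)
Your proof is correct and follows essentially the same route as the paper: induction on $k$, the first-step recursion, and a layer-cake (Fubini--Tonelli) representation of $\sum_y P(x)(y) f_k(y)$ as an integral over upward-closed superlevel sets of $f_k$, to which stochastic monotonicity is applied before integrating. The differences are cosmetic. You encode the paper's three-case analysis compactly as $f_{k+1} = \max(\mathbf{1}_U, g_k)$, and you use strict superlevel sets $\{f_k > c\}$ where the paper uses $\{f_k \geq t\}$.
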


\begin{proof}
We proceed by induction on $k$. For $k = 0$ we have
$f_0 = \mathbf{1}_U$, which is non-decreasing since $U$ is upward-closed.

Assume $f_k$ is non-decreasing and let $x \leq x'$. Recall
$$
  f_{k+1}(x) \;=\;
  \begin{cases}
    1 & \text{if } x \in U, \\[2pt]
    \displaystyle\sum_{y \in S} P(x)(y)\, f_k(y) & \text{otherwise.}
  \end{cases}
$$
If $x \in U$ then $x' \in U$ by upward-closedness, and
$f_{k+1}(x) = f_{k+1}(x') = 1$. If $x \notin U$ and $x' \in U$ then
$f_{k+1}(x') = 1 \geq f_{k+1}(x)$. There remains the case
$x, x' \notin U$, for which it suffices to prove
$\sum_y P(x)(y) f_k(y) \leq \sum_y P(x')(y) f_k(y)$.

Note that $P(x)(y) = \mathbb{P}_x(X_1 = y)$. For every $y \in S$, since
$f_k(y) \in [0,1]$, we have
$f_k(y) = \int_0^1 \mathbf{1}_{\{f_k(y) \geq t\}}\,\mathrm{d}t$, $\mathbf{1}_{\{f_k(y) \geq t\}}$ being equal to $1$ for $t \in [0, f_k(y)]$ and $0$ otherwise. All terms
being non-negative, the Fubini--Tonelli theorem gives
$$
  \sum_{y \in S} P(x)(y)\, f_k(y)
  \;=\; \mathbb{E}_x\big[f_k(X_1)\big]
  \;=\; \mathbb{E}_x\Big[\int_0^1 \mathbf{1}_{\{f_k(X_1) \geq t\}}\,\mathrm{d}t\Big]
  \;=\; \int_0^1 \mathbb{P}_x\big(f_k(X_1) \geq t\big)\, \mathrm{d}t .
$$
By the induction hypothesis $f_k$ is non-decreasing, so the level set
$L_t = \{\, y \in S \mid f_k(y) \geq t \,\}$ is upward-closed for every
$t \in [0,1]$, and $\mathbb{P}_x\big(f_k(X_1) \geq t\big) = P(x)(L_t)$.
Definition~\ref{def:stochastic-monotonicity} therefore yields
$P(x)(L_t) \leq P(x')(L_t)$ for every such $t$, and integrating over
$t \in [0,1]$ gives the required inequality.
\end{proof}

Let us recall the notion of bounded coarseness, as introduced by Abdulla et al.~\cite{AbdullaBenHendaMayrSandbergATVA2006}. A Markov chain
$(S, P)$ is \emph{boundedly coarse with parameter
$(\beta, K) \in (0,1] \times \Nat$ towards a target $A$} if for all
$x \in S$, either $x \notin \Pred^*(A)$ or
$\mathbb{P}_x(\tau_A \leq K) \geq \beta$. In what follows, we say that a pWSTS is decisive, boundedly coarse, and so on, when the Markov chain it induces is.

\begin{proposition}
\label{prop:boundedly-coarse-pWSTS}
Let $(\mathcal{S}, P)$ be a stochastically monotone pWSTS and let $U \subseteq S$ be upward-closed. Then $(\mathcal{S}, P)$ is boundedly coarse towards $U$.
\end{proposition}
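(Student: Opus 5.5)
By Lemma~\ref{lem:finite-depth}, there is $K\in\Nat$ with $\Pred^*(U)=\Pred^{\le K}(U)$. I take this $K$ as the horizon parameter. The set $\Pred^*(U)$ is upward-closed, so by Higman's property it has a finite basis $B=\{b_1,\dots,b_r\}\subseteq \Pred^*(U)$. The plan is to obtain a uniform lower bound on the probability of reaching $U$ within $K$ steps at the finitely many basis elements, and then to propagate this bound to all of $\Pred^*(U)$ using Lemma~\ref{lem:monotone-reachability}.

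First, fix $b_i\in B$. Since $b_i\in\Pred^{\le K}(U)$, there is a finite run $b_i=s_0\to s_1\to\cdots\to s_m$ with $m\le K$ and $s_m\in U$. If some $s_j$ is a deadlock, it has no successor, so it must be the last state of the run and already lie in $U$; shortening the run if necessary, we may assume every transition on it is taken in a non-deadlock state. By Definition~\ref{def:pWSTS}, each such transition then has positive probability $P(s_j)(s_{j+1})>0$. Hence
\[
\beta_i:=\Prob_{b_i}(\tau_U\le K)\;\ge\;\Prob_{b_i}(\tau_U\le m)\;\ge\;\prod_{j<m}P(s_j)(s_{j+1})\;>\;0 .
\]
The first inequality uses $m\le K$, and the second holds because the cylinder of this run is contained in $\{\tau_U\le m\}$. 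Set $\beta:=\min_i\beta_i$. This is positive since $B$ is finite, and $\beta\le 1$.

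Second, let $x\in S$. If $x\notin\Pred^*(U)$, the bounded-coarseness condition holds trivially. Otherwise $x\in\uparrow B$, so $b_i\le x$ for some $i$. Lemma~\ref{lem:monotone-reachability}, applied with $k=K$, says that $f_K$ is non-decreasing, so $\Prob_x(\tau_U\le K)\ge\Prob_{b_i}(\tau_U\le K)\ge\beta$. Therefore the chain is boundedly coarse towards $U$ with parameter $(\beta,K)$.

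The only delicate step is the positivity argument at the basis elements, which is where deadlocks must be handled; it is resolved by the shortening argument above. The essential ingredient is that the wqo structure reduces the required uniform bound to finitely many points, and stochastic monotonicity transfers the bound upward from those points. No effectiveness assumption is needed, since the statement is purely existential.
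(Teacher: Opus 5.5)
Your proof is correct and essentially matches the paper's: take a finite basis of $\Pred^*(U)$, get a positive lower bound at each basis element from a witness run, and transfer it upward with Lemma~\ref{lem:monotone-reachability}. The differences are cosmetic: you take $K$ from Lemma~\ref{lem:finite-depth} rather than as the maximum witness-run length, and you handle deadlocks explicitly where the paper leaves this implicit; the one omission is the case $U=\emptyset$, which the paper treats separately, where your basis is empty and $\min_i\beta_i$ is undefined, so set $\beta=1$ there.
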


\begin{proof}
If $U = \emptyset$, the result is immediate since $\Pred^*(U) = \emptyset$.
Otherwise, since $\leq$ is a wqo and $\Pred^*(U)$ is upward-closed, $\Pred^*(U)$ admits a finite basis
$B' = \{b'_1, \dots, b'_r\}$, with $r \geq 1$ because $U \neq \emptyset$.
For each $b'_j$, there exists a finite run $\pi_j$ from $b'_j$ to $U$. Denote by $k_j$ its length and by $\delta_j$ its probability. We have $k_j \ge 0$, $\delta_j > 0$, and $\mathbb{P}_{b'_j}(\tau_U \leq k_j) \geq \delta_j$. Set
$$
  K \;=\; \max\Big(1, \max_{1 \leq j \leq r} k_j\Big)
  \qquad \text{and} \qquad
  \beta \;=\; \min_{1 \leq j \leq r} \delta_j \;>\; 0 ,
$$
which are both well defined since $r$ is finite.

Let $y \in S$. If $y \notin \Pred^*(U)$, the condition of bounded coarseness is trivially satisfied. Assume now that $y \in \Pred^*(U)$. There is some $j$ such that $y \geq b'_j$. Since
$K \geq k_j$ and $\{\tau_U \leq k_j\} \subseteq \{\tau_U \leq K\}$,
Lemma~\ref{lem:monotone-reachability} gives
$$
  \mathbb{P}_y(\tau_U \leq K)
  \;\geq\; \mathbb{P}_y(\tau_U \leq k_j)
  \;\geq\; \mathbb{P}_{b'_j}(\tau_U \leq k_j)
  \;\geq\; \delta_j \;\geq\; \beta .
$$
Thus, $(S, P)$ is boundedly coarse with parameters $(\beta, K)$ towards $U$. 
\end{proof}

Recall that a boundedly coarse Markov chain is also globally coarse \cite{AbdullaBenHendaMayr2007}, meaning there exists a parameter $\delta > 0$ such that for every state $x \in S$, either $x \notin \Pred^*(U)$ or $\mathbb{P}_x(\tau_U < \infty) \geq \delta$. Furthermore, according to \cite{AbdullaBenHendaMayr2007}, any Markov chain that is globally coarse with respect to a set is decisive with respect to that same set. This immediately yields the following corollary.

\begin{corollary}
\label{cor:pWSTS-decisive}
A stochastically monotone pWSTS is decisive with respect to every upward-closed set.
\end{corollary}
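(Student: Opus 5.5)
The plan is to combine Proposition~\ref{prop:boundedly-coarse-pWSTS} with the standard chain "bounded coarseness $\Rightarrow$ global coarseness $\Rightarrow$ decisiveness". To keep the argument self-contained, I would reprove that chain directly rather than only citing~\cite{AbdullaBenHendaMayr2007}. Fix an upward-closed $U$ and write $A = U\cup\widetilde U$. If $U=\emptyset$, then $\widetilde U = S$ and every run starts in $A$, so the claim is trivial. Otherwise, Proposition~\ref{prop:boundedly-coarse-pWSTS} provides $(\beta,K)$ with $\beta>0$ such that every $y\in\Pred^*(U)$ satisfies $\mathbb{P}_y(\tau_U\le K)\ge\beta$.

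The key step is the geometric bound $\mathbb{P}_x(\tau_A > nK) \le (1-\beta)^n$ for every $x\in S$ and $n\in\Nat$, which I would prove by induction on $n$ using the Markov property at time $nK$. On the event $\{\tau_A > nK\}$, the state $X_{nK}$ lies outside $A$, hence in $\Pred^*(U)\setminus U$. From there the chain hits $U\subseteq A$ within $K$ further steps with probability at least $\beta$. Hence
\[
\mathbb{P}_x(\tau_A > (n+1)K) \le \mathbb{E}_x\big[\mathbf{1}_{\{\tau_A>nK\}}\,\mathbb{P}_{X_{nK}}(\tau_A > K)\big] \le (1-\beta)\,\mathbb{P}_x(\tau_A>nK).
\]
Note that $\{\tau_A>nK\}\in\mathcal{F}_{nK}$ because $\tau_A$ is a stopping time, so the Markov property applies. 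Letting $n\to\infty$ gives $\mathbb{P}_x(\tau_A=\infty)=0$, that is, $\mathbb{P}_x(\tau_{U\cup\widetilde U}<\infty)=1$. This is exactly decisiveness by the remark following the definition.

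The only delicate point is the conditioning step, where I must justify $\mathbb{P}_x(\tau_A>(n+1)K \mid \mathcal{F}_{nK}) = \mathbb{P}_{X_{nK}}(\tau_A>K)$ on $\{\tau_A>nK\}$. This is the simple Markov property at a deterministic time, applied to the shifted run. Two further remarks are needed. Deadlocks cause no trouble: a deadlock either lies in $U$, or lies outside $\Pred^*(U)$ and hence in $\widetilde U$, so the absorbing convention plays no role. Finally, $K\ge1$ by construction in Proposition~\ref{prop:boundedly-coarse-pWSTS}, which guarantees that the time blocks are non-degenerate.
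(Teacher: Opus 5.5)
Your proof is correct, and it shares its first step with the paper: both rest on Proposition~\ref{prop:boundedly-coarse-pWSTS}. You differ in how you pass from bounded coarseness to decisiveness. The paper gives no argument for this step. It cites Abdulla et al.\ for the chain bounded coarseness $\Rightarrow$ global coarseness $\Rightarrow$ decisiveness, and the second link uses no uniform time horizon, so it is purely qualitative. You instead use the horizon $K$ directly: you cut time into blocks of length $K$ and apply the simple Markov property at the deterministic times $nK$.

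The paper's route is shorter and relies on an implication that holds for a larger class of chains. Yours is self-contained, needs nothing beyond the Markov property, and is quantitative: it gives $\mathbb{P}_x(\tau_{U\cup\widetilde U}>n)\le(1-\beta)^{\lfloor n/K\rfloor}$ for every $n$, uniformly in $x\in S$.

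This extra information has a concrete use. In an effective stochastically monotone pWSTS, $K$ and a positive rational lower bound on $\beta$ are computable. You obtain them from the finite basis of $\Pred^*(U)$ together with the witness runs that Procedure~\ref{alg:witness-path} returns for its elements. Your bound therefore gives an a priori exploration depth for Semi-algorithm~\ref{alg:dynamic-truncation}, which is exactly what Remark~\ref{rem:gamma-comparison} says is missing in general. It also shows that this subclass, Galton--Watson processes included, admits the computable tail bounds, uniform in the initial state, that the conclusion asks about.
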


\subsection{Effective pWSTS}


Before defining effectiveness for pWSTS, let us recall a few standard notions of computability. First, a real number $x$ is \emph{computable} if there exists an algorithm which, given a positive rational precision $\varepsilon$, returns rational numbers $a,b$ such that $a \le x \le b$ and $b-a\le \varepsilon$. In what follows, we further say that $P$ is \emph{computable} if there is an
algorithm which, given $x, y \in S$ and a rational precision
$\varepsilon > 0$, returns rationals $a, b$ with $a \leq P(x)(y) \leq b$ and
$b - a \leq \varepsilon$. Second, a mapping $\chi : S \to 2^S$, such as $\Post$, is \emph{effectively
enumerable} if there is an algorithm $A$ which, on input $x \in S$ and an
index $m \in \Nat_+$, either returns an element of $\chi(x)$ or does not terminate, and such
that: (i) every element of $\chi(x)$ is
returned for at least one index $m$; and (ii) $A$ may fail to terminate on
an index $m$ only if the elements returned on the indices $1, \dots, m-1$
already exhaust $\chi(x)$.



The definition of an \emph{effective Markov chain} of Abdulla et
al.~\cite{AbdullaBenHendaMayr2007} requires $\Post(x)$ to be computable
explicitly for each state $x$, which restricts their framework to finitely
branching models. We only require $\Post$ to be effectively enumerable and $P$ to be computable,
allowing $\Post(x)$ to be infinite: rather than
exploring all successors, we enumerate them while accumulating their
probabilities, which bounds the unexplored part of the distribution and
provides the stopping criterion.

For a given target set $F$, their approach also requires deciding membership in $\Pred^*(F)$, as well as a finite representation of $F$ with decidable membership. We replace both requirements by leveraging a wqo and the Pred-Basis assumption, restricting targets to upward-closed sets, which is a natural setting for the coverability problem.
\begin{definition}[Effective pWSTS]
A pWSTS $(\mathcal{S}, P)$ is effective if:
\begin{itemize}
    \item $\to$ is decidable;
 \item $=$ is decidable;
 \item $\mathcal{S}$ is a Pred-Basis effective WSTS;
 \item $\Post$ is effectively enumerable;
 \item $P$ is computable.
\end{itemize}
\end{definition}

An effective pWSTS can thus be finitely represented by a tuple of six
algorithms $(\to, =, \le, \mathrm{Pred\text{-}Basis}, \mathrm{succ}, P)$,
the decidability of $\le$ being part of Pred-Basis effectiveness.

Since $=$ is decidable and $\Post$ is effectively enumerable, we can restrict ourselves to a repetition-free effective enumeration, which we assume hereafter. Importantly, our framework is not restricted to rational probabilities but accommodates computable real probabilities. This broader scope allows us to consider models such as the Galton--Watson process (see Section~\ref{sec:applications}) with Poisson offspring distributions.

\begin{example}
\label{ex:infinite-jumps-effective}
The system $S_2$ of Example~\ref{ex:infinite-jumps} is an effective pWSTS.
Indeed, $\leq$ and $=$ are decidable on $\Nat$, and so is $\to$, since
$n \to m$ holds exactly when $m > n$, or $m = n-1$ with $n \in \Nat_+$. The
transition probabilities are rationals given by a formula, so
$P$ is computable. The map $\mathrm{succ}$ defined by
$\mathrm{succ}(n,1) = n-1$ and $\mathrm{succ}(n,m) = n+m-1$ for
$n \in \Nat_+$, and by $\mathrm{succ}(0,m) = m$, is computable and
enumerates $\Post(n)$ without repetition, so $\Post$ is effectively
enumerable. Finally, for any $b \in \Nat$, every state can reach
$\uparrow b$ in one jump, so $\Pred(\uparrow b) = \Nat$ and
$\{0\}$ is a computable finite basis of it: $S_2$ is
Pred-Basis effective.

This example shows that infinite branching is, by itself, no obstacle to
effectiveness.
\end{example}



Furthermore, it is well known that VAS and LCS are Pred-Basis effective WSTS with a finite and computable $\Post$. Moreover,  once
equipped with probabilities, $P$ is computable as sums or quotients of rationals.

\begin{proposition}
\label{prop:pvas-plcs-effective}
pVAS and pLCS are effective pWSTS.
\end{proposition}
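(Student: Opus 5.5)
The plan is to check the five items of the definition of an effective pWSTS one by one, separately for pVAS and for pLCS. In each case we reuse the classical facts that VAS and LCS (with loss-action semantics) are Pred-Basis effective WSTS, which gives decidability of $\le$ and computable finite bases of $\Pred(\uparrow B)$.

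\textbf{pVAS.} States are vectors in $\Nat^d$, so $=$ and $\le$ (componentwise comparison) are decidable. The relation $u \to u'$ is decided by checking whether $u'-u \in V$, since $V$ is finite and nonnegativity of $u'$ is automatic. $\Post(u)=\{u+v \mid v\in \mathrm{En}(u)\}$ is a finite, explicitly computable set. An effective enumeration lists it without repetitions and then diverges on larger indices, which is allowed by clause (ii). For a deadlock $u$ the relation has no successor, and the enumeration diverges immediately. The kernel $P$ is computable because $P(u)(u')$ is either $w(v)/\sum_{v'\in\mathrm{En}(u)}w(v')$ (rational and exactly computable), or $1$ at a deadlock with $u'=u$, or $0$ otherwise. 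Each of these cases is decidable. For Pred-Basis, we take the minimal elements of $\{\max(b-v,0)\mid b\in B,\ v\in V\}$, after filtering by enabledness in the standard way.

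\textbf{pLCS.} Configurations $(q,w)$ are finite objects, so $=$ is decidable, and $\sqsubseteq$ is decidable by greedy subword matching. $\to$ is decidable because $(q,w)\to(q',w')$ only requires guessing an intermediate $w''\sqsubseteq w$, of which there are finitely many, plus a rule in the finite set $\Delta$. It follows that $\Post(q,w)$ is finite and computable, and we enumerate it as for pVAS. Since pLCS are deadlock-free, no self-loop case arises. Pred-Basis effectiveness is the classical result for lossy channel systems \cite{AbdullaCeransJonssonTsay2000}.

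The main point to check is that $P$ is computable for pLCS. Following \cite{ABDULLA2005141}, $P((q,w))((q',w'))$ is a finite sum over the intermediate words $w''\sqsubseteq w$ (counted with multiplicity as subsequence embeddings) and over the rules mapping $(q,w'')$ to $(q',w')$. Each term is a product of a loss factor $p_\ell^{|w|-|w''|}(1-p_\ell)^{|w''|}$ and the normalized weight $W(\delta)/\sum W$ over the rules enabled at $(q,w'')$. All these quantities are rationals computable from the finite data, so $P(x)(y)$ is an exactly computable rational. This gives interval approximations of any precision $\varepsilon$.
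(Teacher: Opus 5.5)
Your proposal is correct and follows the same route as the paper. The paper justifies the statement in one line: VAS and LCS are Pred-Basis effective WSTS with finite computable $\Post$, and $P$ is a sum or quotient of rationals. You check the five effectiveness conditions explicitly; choosing loss-then-action for the pLCS kernel correctly matches the paper's $\to$. (Minor point: the enabledness filter in your pVAS basis is vacuous, since $u \geq b - v$ with $b \geq 0$ already gives $u + v \in \Nat^d$.)
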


Building on this definition, we state two propositions and one lemma that illustrate how the computability of $P$ can be combined with the effective enumerability of $\Post$. The first proposition shows that deadlock states can be identified. The second shows that, when transition probabilities are known exactly, $\Post$ can be effectively enumerated with a decidable termination test. The lemma then gives the practical counterpart: even without exact values, the enumeration can be stopped once a prescribed fraction of the probability mass has been covered. In what follows, and in the algorithms in particular, we represent the effective enumerability of $\Post$ by a computable function $\mathrm{succ}: S \times \Nat_+ \to S$. 

\begin{proposition}
\label{lem:decide-deadlock}
Let $(\mathcal{S}, P)$ be an effective pWSTS. It is decidable whether a given state $x \in S$ is a deadlock.
\end{proposition}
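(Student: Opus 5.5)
The plan is to reduce deadlock-detection to the termination behaviour of the enumeration algorithm for $\Post$. By definition, $x$ is a deadlock if and only if $\Post(x)=\emptyset$. Effective enumerability alone does not decide this, since the algorithm $\mathrm{succ}$ on input $(x,1)$ may fail to terminate exactly when $\Post(x)$ is empty. So we will use the Markov kernel to obtain a second, complementary semi-decision procedure and run the two in parallel.

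The key observation is the dichotomy imposed by Definition~\ref{def:pWSTS}:
\begin{itemize}
\item if $x$ is a deadlock, then $P(x)(x)=1$;
\item if $x$ is not a deadlock, then $P(x)(x)>0$ iff $x\to x$.
\end{itemize}
We first use decidability of $\to$ to test whether $x\to x$.

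\emph{Case $x\to x$.} Then $\Post(x)\neq\emptyset$, so $x$ is not a deadlock, and we answer immediately.

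\emph{Case $x\not\to x$.} If $x$ is not a deadlock, then $P(x)(x)=0$. If $x$ is a deadlock, then $P(x)(x)=1$. So in this case $P(x)(x)\in\{0,1\}$, and it suffices to distinguish the two values. We call the computability algorithm for $P$ on $(x,x)$ with precision $\varepsilon=1/3$. It returns rationals $a\le P(x)(x)\le b$ with $b-a\le 1/3$. If $a>0$ we answer ``deadlock'', and otherwise $b\le 1/3<1$, so we answer ``not a deadlock''. This procedure always terminates, because computability of $P$ yields a terminating algorithm for every rational precision.

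An alternative route, used as a sanity check, is to dovetail $\mathrm{succ}(x,1)$ with the approximations of $P(x)(x)$. This is unnecessary once the case split on $x\to x$ is made.

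The only delicate point is the case where $x\to x$ holds but $x$ is itself... no conflict arises there. The real subtlety is that for a deadlock, Definition~\ref{def:pWSTS} sets $P(x)(x)=1$ even though $x\not\to x$. This is precisely why the case split on $x\to x$ is needed: it removes the ambiguity between a non-deadlock with a self-loop of probability possibly close to $1$ and a deadlock. Once that case is settled by decidability of $\to$, the remaining values are separated by a fixed gap, and a single finite-precision query decides the question.
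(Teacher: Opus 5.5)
Your proof is correct and matches the paper's: decide $x \to x$ using decidability of $\to$, and when $x \not\to x$ use $P(x)(x) \in \{0,1\}$, so that one approximation with precision below $1$ (you take $1/3$, the paper takes anything below $1/2$) settles the question via whether the lower bound is positive. The dovetailing remark with $\mathrm{succ}(x,1)$ and the unfinished sentence in your last paragraph add nothing and can be dropped.
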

\begin{proof}
By definition, a state $x$ is a deadlock if $\Post(x) = \emptyset$. Equivalently, within the pWSTS setting, $x$ is a deadlock state if and only if $x \not\to x$ and $P(x)(x) = 1$. 

Since the pWSTS is effective, $\to$ is decidable. If $x \not\to x$, since $P(x)(x)$ must be either $0$ or $1$, it suffices to compute a lower bound for $P(x)(x)$ to a precision strictly better than $1/2$. If it returns a lower bound greater than $0$, we can deduce that $P(x)(x) = 1$, confirming $x$ is a deadlock. Otherwise, it is not.
\end{proof}

\begin{proposition}
\label{lem:exact-termination}
Let $(\mathcal{S}, P)$ be an effective pWSTS. Assume that for any $x,y \in S$, $P(x)(y) \in \mathbb{Q}$ and is computed exactly. Then, the termination of $\mathrm{succ}$ is decidable.
\end{proposition}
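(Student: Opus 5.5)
The plan is to show that, given $x$ and an index $m$, we can decide whether $\mathrm{succ}(x,m)$ terminates. By clause (ii) of effective enumerability, non-termination on index $m$ can occur only if the outputs on indices $1,\dots,m-1$ already exhaust $\Post(x)$. Since the enumeration is repetition-free, the procedure runs $\mathrm{succ}(x,1),\dots,\mathrm{succ}(x,m-1)$ in order. Inductively, each of these calls is only issued after deciding that it terminates, so every call we actually launch halts. The remaining task is to decide, once the distinct successors $y_1,\dots,y_{m-1}$ are in hand, whether they already form all of $\Post(x)$.

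First we handle deadlocks. By Proposition~\ref{lem:decide-deadlock}, it is decidable whether $x$ is a deadlock. If it is, then $\Post(x)=\emptyset$, and $\mathrm{succ}(x,m)$ is allowed not to terminate for every $m\ge 1$. We cannot safely run it, so in this case we answer ``possibly non-terminating''.

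The intended reading is that termination is guaranteed exactly when an unlisted successor remains. So we take ``termination decidable'' to mean that we can decide whether the enumeration is finished at index $m$, i.e.\ whether $\{y_1,\dots,y_{m-1}\}=\Post(x)$. For a deadlock this holds trivially.

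For a non-deadlock $x$, the definition of pWSTS gives $\Post(x)=\mathrm{supp}(P(x))$ and $\sum_{y}P(x)(y)=1$, with all terms positive on $\Post(x)$. The key step is the equivalence
\[
\{y_1,\dots,y_{m-1}\}=\Post(x)\iff \sum_{i=1}^{m-1}P(x)(y_i)=1 .
\]
The forward direction holds because the listed states are then the whole support. The backward direction holds because any further successor $z\notin\{y_i\}$ would contribute $P(x)(z)>0$, pushing the total mass above $1$. The $y_i$ are distinct by the repetition-free assumption, so no mass is double counted.

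Since each $P(x)(y_i)$ is a rational computed exactly, the finite sum is an exact rational, and comparing it with $1$ is decidable. If the sum equals $1$, the enumeration is finished and $\mathrm{succ}(x,m)$ need not be called. Otherwise, some successor remains unlisted, so by (ii) $\mathrm{succ}(x,m)$ must terminate, and it returns a new successor. Iterating over $m$ then gives a complete enumeration of $\Post(x)$ with a decidable stopping test, even when $\Post(x)$ is finite but of unknown size.

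The main obstacle is conceptual rather than technical. Without exactness, one can only approximate $\sum_i P(x)(y_i)$ from both sides, and equality with $1$ is merely semi-decidable. This is precisely why the hypothesis $P(x)(y)\in\mathbb{Q}$ computed exactly is needed, and why the subsequent lemma settles for covering a prescribed fraction of the mass. The care needed in the write-up is to phrase the decision so that we never call $\mathrm{succ}$ on an index where it may diverge.
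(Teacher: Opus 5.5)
Your proof is correct and follows the paper's argument. You first dispose of deadlocks via Proposition~\ref{lem:decide-deadlock}; then, for a non-deadlock $x$, you use that $\Post(x)=\mathrm{supp}(P(x))$ carries total mass $1$, so the distinct successors listed before index $m$ exhaust $\Post(x)$ if and only if their exactly computed rational probabilities sum to $1$.

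Two small points. First, you do not need to hedge with ``possibly non-terminating'': since $\mathrm{succ}$ is repetition-free and may only output elements of $\Post(x)$, exhaustion before index $m$ forces $\mathrm{succ}(x,m)$ to diverge, in particular always at a deadlock, so your reformulation is exactly the termination question. Second, in your closing remark the roles are reversed: for computable reals it is $\sum_i P(x)(y_i)<1$, not equality with $1$, that is semi-decidable.
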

\begin{proof}
Let $x\in S$ and $m \in \mathbb{N}_+$. One first checks if $x$ is a deadlock state, which is decidable by Proposition~\ref{lem:decide-deadlock}. If $x$ is a deadlock, $\Post(x)$ is empty, so it does not terminate. If it is not, to decide whether $m > |\Post(x)|$, one enumerates the successors $z_1, z_2, \dots$ of $x$ and accumulates the exact sum $\sum_{i \leq k} P(x)(z_i)$. Since $\Post(x)$ is exactly the support of $P(x)$, the total mass over $\Post(x)$ is exactly $1$. If the accumulated sum reaches $1$ (which is decidable using exact values) at some step $p < m$, then $\Post(x)$ has been fully enumerated, which implies $m > |\Post(x)|$. Otherwise, if $\sum_{i \leq m-1} P(x)(z_i) < 1$, elements remain to be enumerated. 
\end{proof}

\begin{lemma}
\label{lem:approx-termination}
Let $(\mathcal{S}, P)$ be an effective pWSTS. Let $x \in S$ be a non-deadlock state and $\gamma \in \mathbb{Q}\cap(0,1)$. Enumerating elements of $\Post(x)$ until the accumulated probability mass
strictly exceeds $1 - \gamma$ terminates for a certain incremental and adaptive
precision on the transition probabilities.
\end{lemma}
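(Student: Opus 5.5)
We make the procedure explicit by interleaving the enumeration of successors with a refinement of the precision, then argue that it terminates from the convergence of the full series. Let $z_1, z_2, \dots$ be the repetition-free enumeration of $\Post(x)$ given by $\mathrm{succ}(x,\cdot)$. It is nonempty since $x$ is not a deadlock, and $\mathrm{succ}(x,m)$ terminates at least as long as $\Post(x)$ is not exhausted. By the compatibility condition of Definition~\ref{def:pWSTS}, $\Post(x)=\mathrm{supp}(P(x))$, so
\[
\sum_{i\ge 1} P(x)(z_i)=1 .
\]
At stage $k=1,2,\dots$ the procedure does two things. First, it obtains $z_k=\mathrm{succ}(x,k)$, if the enumeration has not already been exhausted. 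Second, it calls the algorithm for $P$ on each pair $(x,z_i)$, $i\le k$, with precision $\varepsilon_k=\gamma 2^{-k}/k$, and gets rational lower bounds $a_i^{(k)}\le P(x)(z_i)$ with $P(x)(z_i)-a_i^{(k)}\le\varepsilon_k$. It stops as soon as $L_k=\sum_{i\le k}a_i^{(k)}>1-\gamma$. This test compares rationals, so it is decidable. The test is sound: $L_k$ is a lower bound on $\sum_{i\le k}P(x)(z_i)$, so when it succeeds the mass actually covered strictly exceeds $1-\gamma$.

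\textbf{Termination when $\Post(x)$ is infinite.} The partial sums $\sum_{i\le k}P(x)(z_i)$ converge to $1$, so there is $k_0$ with $\sum_{i\le k_0}P(x)(z_i)>1-\gamma/2$. For any $k\ge k_0$,
\[
L_k\ \ge\ \sum_{i\le k}P(x)(z_i)-k\varepsilon_k\ \ge\ 1-\tfrac{\gamma}{2}-\gamma 2^{-k}\ >\ 1-\gamma
\]
once $k\ge 2$. Hence the loop stops by stage $\max(k_0,2)$.

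\textbf{Termination when $\Post(x)=\{z_1,\dots,z_N\}$ is finite.} This is the main obstacle. Calling $\mathrm{succ}(x,N+1)$ may fail to terminate, and by Proposition~\ref{lem:exact-termination} we cannot in general decide exhaustion without exact probabilities. The plan is to avoid ever depending on that call. The true mass already covered at stage $N$ is $1$. So for $k=N$ the stopping test sees $L_N\ge 1-N\varepsilon_N>1-\gamma$ and succeeds before the $(N+1)$-th call is made. The one subtlety is the order of operations within a stage: the test on $L_k$ must be performed, with the refined precision, before $\mathrm{succ}(x,k+1)$ is requested.

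In both cases, we finally note what "incremental and adaptive precision" means here. The precision shrinks with the number of terms, as $\varepsilon_k=\gamma2^{-k}/k$. No a priori bound on $k_0$ is needed; that bound depends on the unknown tail of $P(x)$, which is why the precision must be refined on the fly rather than fixed in advance.
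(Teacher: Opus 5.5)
Your proof is correct and follows essentially the paper's argument. Both compute rational lower bounds at a precision that shrinks with the index, so the total approximation loss is a fixed fraction of $\gamma$; since the exact partial sums tend to $1$, the computed ones then exceed $1-\gamma$ at a finite index. The differences are only bookkeeping: the paper computes the $i$-th term once at precision $\gamma\,2^{-(i+2)}$ (total loss at most $\gamma/4$, matching the incremental update $q \gets q + P(y)(z)$ of the pseudocode) rather than re-evaluating all terms at a uniform precision each stage, and your explicit check that the test fires by stage $N$, before any out-of-range call to $\mathrm{succ}$, spells out a point the paper leaves implicit.
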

\begin{proof}
Since $x$ is not a deadlock, $\Post(x) = \mathrm{supp}(P(x))$ and the total
mass $\sum_{z \in \Post(x)} P(x)(z)$ equals $1$. The probabilities being
computable reals, they are handled through rational bounds, and exact
equality tests such as $\sum_i P(x)(z_i) = 1$ may be undecidable. The
enumeration therefore accumulates rational lower bounds, the $i$-th term
computed to precision at most $\gamma\,2^{-(i+2)}$, so the total
approximation loss is bounded by $\sum_{i \geq 1} \gamma\,2^{-(i+2)} =
\gamma/4$. The computed partial sums thus increase to a value at least
$1 - \gamma/4$. Since $1 - \gamma/4 > 1 - \gamma$, they exceed $1 - \gamma$
at some finite index. Both the computed sum and the threshold being
rational, the comparison is decidable, and the loop halts there.
\end{proof}

Within this framework, one can effectively compute a run from any $x \in \Pred^*(\uparrow B)$ to $\uparrow B$. 

\begin{restatable}{proposition}{witnessRun}
\label{prop:witness-run}
Let $(\mathcal{S}, P)$ be an effective pWSTS. One can compute, for any finite set $B$ and any initial state $x \in \Pred^*({\uparrow}B)$, a shortest run from $x$ to the target ${\uparrow}B$, along with its probability.
\end{restatable}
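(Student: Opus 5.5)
We compute the backward-coverability stratification and then walk forward greedily through it. Using Pred-Basis effectiveness, we compute finite bases $B_0 = B$ and $B_{m+1}$ of $\uparrow B_m \cup \Pred(\uparrow B_m)$. Each $\uparrow B_m$ equals $\Pred^{\le m}(\uparrow B)$, since upward-closedness is preserved by $\Pred$. The backward coverability algorithm stops at the first $\ell$ with $\uparrow B_{\ell+1} = \uparrow B_\ell$, and this stability is detected by decidability of $\le$ on the finite bases. Lemma~\ref{lem:finite-depth} ensures that such an $\ell$ exists.

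Given $x \in \Pred^*(\uparrow B)$, we define $d(x) = \min\{m \mid x \in \uparrow B_m\}$. This index is computable by testing $b \le x$ for $b \in B_m$, $m = 0,\dots,\ell$. It equals the length of a shortest run from $x$ to $\uparrow B$, because $\uparrow B_m$ is exactly the set of states covering $B$ within $m$ steps. If $d(x) = 0$, the run is $x$ alone, with probability $1$.

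Otherwise $x \notin \uparrow B$, and since $x$ reaches $\uparrow B$, $x$ is not a deadlock. Every successor $y$ of $x$ satisfies $d(y) \ge d(x)-1$, and by definition of $d$ there is some $y \in \Post(x)$ with $d(y) = d(x)-1$. We find one by enumerating $\mathrm{succ}(x,1), \mathrm{succ}(x,2), \dots$ and testing membership of each in $\uparrow B_{d(x)-1}$, stopping at the first success. This search terminates: such a $y$ exists in $\Post(x)$ and therefore has a finite index, and every call with a smaller index terminates, because non-termination of $\mathrm{succ}$ at index $m$ only happens once $\Post(x)$ is exhausted. This is the step I expect to be the main obstacle, since $\Post(x)$ may be infinite and we must never call $\mathrm{succ}$ past the point of exhaustion. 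It is resolved by stopping as soon as a witness is found, and the existence of the witness guarantees that this happens before exhaustion.

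Iterating the search $d(x)$ times yields a run $x = s_0 \to s_1 \to \cdots \to s_{d(x)}$ with $s_{d(x)} \in \uparrow B$, and this run is shortest by the characterization of $d$. Its probability is $\prod_i P(s_i)(s_{i+1})$, a finite product of computable reals, each strictly positive because $s_i \to s_{i+1}$ with $s_i$ a non-deadlock. It is therefore computable to any precision, by requesting each factor with precision chosen from crude upper bounds of $1$ on the others. That is the sense in which its probability is computed.
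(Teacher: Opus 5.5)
Your proposal is correct and follows essentially the same route as the paper's Procedure~\ref{alg:witness-path}. You compute the layered bases of $\Pred^{\le k}({\uparrow}B)$ by backward coverability, take the least $k$ with $x$ in layer $k$, and repeatedly pick the first enumerated successor lying in the next lower layer; this terminates because such a successor exists and therefore appears before the enumeration of $\Post$ is exhausted, and your distance function $d$ plays exactly the role of the paper's invariant $y_i \in \Pred^{k-i}({\uparrow}B)$.
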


Denote by $U$ the set ${\uparrow}B$. To give an instance of such a forward algorithm, 
we propose an approach that leverages the
structures already computed by the backward algorithm for deciding
membership in $U$ and $\Pred^*(U)$. Specifically, this algorithm uses the
intermediate bases to guide the path construction: the backward algorithm
computes a finite basis of $\Pred^{\leq k}(U)$ for each $k$, stabilizing at
some finite depth $\ell$ by Lemma~\ref{lem:finite-depth}. These bases can naturally direct the forward exploration: at each step, one enumerates the successors of the current state and simply selects the first element which is a step closer to $U$ with respect to these intermediate bases. Note that a blind interleaved search would not work: with $\Post$ only effectively enumerable, an out-of-range call to $\mathrm{succ}$ does not terminate, whereas the intermediate bases guarantee that a useful successor exists and is reached before the enumeration runs out. 

Formally, we introduce a membership function $\mathrm{In}_U$, computed in a
preprocessing step by the backward algorithm such that for $x \in S$ and $k \in
\mathbb{N}\cup\{-1\}$,
$$
  \mathrm{In}_U(x, k) =
  \begin{cases}
    1 & \text{if }k = -1 \text{ and }x \in \Pred^*(U), \\
    1 & \text{if }k\in \Nat \text{ and }x \in \Pred^{\leq k}(U), \\
    0 & \text{otherwise.}
  \end{cases}
$$
with $\mathrm{In}_U(x, 0) = 1$ iff $x \in U$. Since $U$ is upward-closed and finitely generated by $B$, and the pWSTS is effective, each $\Pred^{\le k}(U)$ admits a computable finite basis. Thus, $\mathrm{In}_U$ is computable by comparison with finitely many elements, as there exists a finite rank $\ell$ such that for all $k \ge \ell$, the bases of $\Pred^{\le \ell}(U)$ and $\Pred^{\le k}(U)$ are equal by Lemma~\ref{lem:finite-depth}. For any initial state $x \in S$, if $x \notin \Pred^*(U)$, Algorithm~\ref{alg:witness-path} outputs $\emptyset$ and $0$. Otherwise, without any additional assumptions, it outputs a minimum-length run $y_0 \cdots y_k$ such that $y_0 = x$, $y_i \notin U$ for $i < k$, and $y_k \in U$, with a strictly positive probability $\prod_{i=0}^{k-1} P(y_i)(y_{i+1})$ under $\mathbb{P}_x$. In particular, this yields a strictly positive lower bound on $\mathbb{P}_x(\tau_U < \infty)$. 

\captionsetup[algorithm]{name=Procedure}

\begin{algorithm}[h]
\caption{Coverability run construction for an effective pWSTS}
\label{alg:witness-path}
\begin{algorithmic}[1]
\Require $x \in S$, $\mathrm{In}_U$, $\mathrm{succ}$, $P$
\Ensure $\emptyset$ and $0$ if $x \notin \Pred^*(U)$; otherwise a finite run
$y_0 \cdots y_k$ with $y_0 = x$ and $y_k \in U$, along with its probability
$p^*$
\If{$\mathrm{In}_U(x,-1) = 0$} \Return $\emptyset, 0$ \EndIf
\State $k \gets 0$
\While{$\mathrm{In}_U(x, k) = 0$} \State $k \gets k + 1$ \EndWhile
\If{$k = 0$} \Return $x, 1$ \EndIf
\State $y_0 \gets x$; \; $p^* \gets 1$
\For{$i \gets 0$ \textbf{to} $k - 1$}
  \State $m \gets 1$; \; $z \gets \mathrm{succ}(y_i,m)$
  \While{$\mathrm{In}_U(z, k - i - 1) = 0$}
    \State $m \gets m + 1$; \; $z \gets \mathrm{succ}(y_i,m)$
  \EndWhile
  \State $y_{i+1} \gets z$; \; $p^* \gets p^* \times P(y_i)(y_{i+1})$
\EndFor
\State \Return $y_0\, y_1 \cdots y_k,\ p^*$
\end{algorithmic}
\end{algorithm}

\begin{proposition}
Procedure~\ref{alg:witness-path} terminates, and it outputs a valid minimum-length run
together with its probability if and only if $x \in \Pred^*(U)$.
\end{proposition}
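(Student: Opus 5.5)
We split on whether $x \in \Pred^*(U)$. Since $\mathrm{In}_U(\cdot,-1)$ is computable from the stabilized basis of $\Pred^*(U)$ (Lemma~\ref{lem:finite-depth} together with Pred-Basis effectiveness and decidability of $\le$), the first test is decidable. If $x \notin \Pred^*(U)$, the procedure returns $\emptyset,0$ at line~1, so it terminates and does not output a run. This settles one direction of the equivalence. Assume from now on that $x \in \Pred^*(U)=\Pred^{\le \ell}(U)$.

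\textbf{Computing $k$.} The first while loop stops at the least $k \le \ell$ with $x \in \Pred^{\le k}(U)$. Since $\Pred^{\le j}(U)$ is increasing in $j$, this $k$ is exactly the distance from $x$ to $U$, namely the least $m$ with $x \in \Pred^m(U)$. If $k=0$, then $x\in U$, and the run $x$ with probability $1$ is trivially valid and minimal.

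\textbf{Invariant for the loop.} For $k>0$, we prove by induction on $i$ that at the start of iteration $i$ we have $y_i \in \Pred^{\le k-i}(U)\setminus \Pred^{\le k-i-1}(U)$. In words, the distance from $y_i$ to $U$ is exactly $k-i$. This holds for $i=0$ by minimality of $k$.

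Given the invariant at step $i$, the point $y_i$ has a successor $z$ with distance $k-i-1$. Such a $z$ exists because $y_i \notin U$, so any shortest run from $y_i$ has length at least $1$ and its second state lies in $\Pred^{\le k-i-1}(U)$. In particular $y_i$ is not a deadlock, so $\Post(y_i)=\mathrm{supp}(P(y_i))$ is enumerated by $\mathrm{succ}$.

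This is the main obstacle: termination of the inner loop. The key point is that $\mathrm{succ}(y_i,m)$ may diverge only after all of $\Post(y_i)$ has been returned (condition (ii) of effective enumerability). By condition (i), the successor $z$ is returned at some finite index $m_0$. Every index $m\le m_0$ that is queried therefore terminates, because $\Post(y_i)$ is not yet exhausted before $m_0$. Hence the loop halts at the first index $m$ whose output lies in $\Pred^{\le k-i-1}(U)$, with $m\le m_0$.

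The chosen $y_{i+1}$ has distance at most $k-i-1$. It cannot have distance smaller than $k-i-1$, since otherwise $y_i$ would have distance less than $k-i$. This restores the invariant.

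\textbf{Validity, minimality and probability.} At $i=k-1$ we obtain $y_k\in\Pred^{\le 0}(U)=U$. For $i<k$, each $y_i$ has positive distance, so $y_i\notin U$. Each step satisfies $y_i\to y_{i+1}$ because $y_{i+1}\in\Post(y_i)$. The run has length $k$, which equals the distance from $x$ to $U$, so it is minimal.

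The factors $P(y_i)(y_{i+1})$ are positive, since $y_{i+1}\in\mathrm{supp}(P(y_i))$. Their product $p^*$ is the probability of the cylinder $y_0\cdots y_k$ under $\mathbb{P}_x$. Since $P$ is only computable, $p^*$ is obtained to any desired precision by multiplying rational enclosures of the factors. There are finitely many factors in $[0,1]$, so the error of the product is controlled by the individual precisions.

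The outer loop runs a bounded number $k$ of iterations, which completes the proof of termination and of both directions of the equivalence.
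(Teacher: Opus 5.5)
Your proof is correct and follows essentially the same route as the paper. You decide membership in $\Pred^*(U)$, take $k$ as the least index with $x \in \Pred^{\le k}(U)$, and maintain the invariant that $y_i$ lies at distance exactly $k-i$ from $U$; each inner loop terminates because some successor in $\Pred^{\le k-i-1}(U)$ appears at a finite index of the enumeration. The only differences are cosmetic: you rule out $y_{i+1}$ being too close to $U$ using the invariant on $y_i$, whereas the paper argues that $x$ would then reach $U$ in fewer than $k$ steps, and your explicit appeal to conditions (i) and (ii) of effective enumerability is a slightly more careful version of the paper's termination argument.
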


\begin{proof}
If $x \notin \Pred^*(U)$, the procedure terminates and returns $\emptyset, 0$.
Otherwise, since $x \in \Pred^*(U) = \bigcup_{j \geq 0} \Pred^{\le j}(U)$,
there exists $j \in \Nat$ with $\mathrm{In}_U(x, j) = 1$, so the first loop
terminates; it exits at the least such $j$, denoted $k$. If $k = 0$, then
$x \in U$ and the run $x$ is returned with probability $1$. Assume $k \geq 1$.
Then $x \in \Pred^{\le k}(U) \setminus \Pred^{\le k-1}(U)$, hence
$x \in \Pred^{k}(U)$, and every run from $x$ to $U$ has length at least $k$.

For the outer loop, we prove the invariant $y_i \in \Pred^{k-i}(U)$ for
$i \leq k$, which holds at $i = 0$. Assume $y_i \in \Pred^{k-i}(U)$ for some
$i < k$. By definition of $\Pred$, there exists
$z \in \Post(y_i) \cap \Pred^{k-i-1}(U)$. This $z$ appears at some finite
index of the enumeration of $\Post(y_i)$, so all its calls to $\mathrm{succ}(y_i,\cdot)$
terminate and it sets $y_{i+1} \in \Post(y_i) \cap \Pred^{\le k-i-1}(U)$.
If we had $y_{i+1} \in \Pred^{j}(U)$ for some $j < k-i-1$, then, since
$y_0 \cdots y_{i+1}$ is a run from $x$, the state $x$ would reach $U$ in
$i+1+j \le k-1$ steps, contradicting the minimality of $k$. Hence
$y_{i+1} \in \Pred^{k-i-1}(U)$, and the invariant is preserved.

The outer loop performs exactly $k$ iterations, and at $i = k$ the invariant
gives $y_k \in \Pred^0(U) = U$. By the same minimality argument, $y_i \notin U$
for $i < k$, and since every run from $x$ to $U$ has length at least $k$, the
returned run has minimum length. Moreover, at termination,
$$
  p^\ast = \prod_{i=0}^{k-1} P(y_i)(y_{i+1})
  = \mathbb{P}_x\Big(\bigcap_{0 \leq i \leq k} \{X_i = y_i\}\Big) > 0,
$$
the last equality holding since $y_0 = x$.
\end{proof}

Note that the returned probability is a computable real: it is a finite
product of values $P(y_i)(y_{i+1})$. This algorithm addresses the computability of a shortest run, given the membership of the initial state in $\Pred^*(U)$. Here, the backward computation provides what the forward one lacks. The
intermediate bases guarantee that a useful successor exists, which $\Post$ does not provide, and this is what rules out the non-termination a blind interleaved search would face. To the best of our knowledge, it has not been formulated in this way before. In the sequel, this algorithm is useful as it returns a strictly positive lower bound on the actual
probability, which the other algorithms do not guarantee.

\section{Path-Enumeration Algorithms}

\label{sec:Path-Enumeration}

The approximate quantitative coverability problem (AQCP) that we tackle in this paper is formally defined as follows.

\noindent\textbf{Input:} an effective pWSTS $(\mathcal{S}, P)$, a finite set $B$, an initial state $x \in S$, a horizon $\ell \in \Nat \cup \{\infty\}$, and a precision $\varepsilon \in \mathbb{Q} \cap (0,1)$.

\noindent\textbf{Output:} a value $p^\ast \in \mathbb{Q}$ such that:
 $ p^\ast \;\leq\; \mathbb{P}_x(\tau_{\uparrow B} < \ell + 1) \;\leq\; p^\ast + \varepsilon $

Standard path-enumeration algorithms \cite{IyerNarasimha1997,AbdullaBenHendaMayr2007} typically approach this problem using breadth-first traversal, which requires the branching degree to be finite. In contrast, our framework imposes no such assumption. 

\begin{remark}[Precision of the transition probabilities]
\label{rem:precision}
For the sake of simplicity, we denote the transition probabilities by $P(x)(y)$ in the algorithms. In practice, we consider rational lower-bounding approximations, computed with sufficient precision to guarantee termination (by Lemma~\ref{lem:approx-termination}). The non-impact of these approximations on correctness is detailed in the proofs of the theorems.
\end{remark}

\subsection{Path-enumeration With Bounded Horizon}

A difficulty is that, for an infinite-branching pWSTS, the number of
runs of length $\ell$ from $x$ might be infinite, so the exact sum $\sum_{\pi}
\mathbb{P}(\pi)$ over accepting runs $\pi$ cannot be computed term by term. The
procedure \ref{alg:bounded-horizon} circumvents this by pruning, at every node, the
low-probability successors, keeping the total discarded mass below
$\varepsilon$. We let $B'$ denote a finite basis of
$\Pred^*( {\uparrow}B)$, computed by the backward coverability algorithm. We define $\mathrm{In}$, for $x \in S$ and $F \in \{B, B'\}$,  as follows:
$$
  \mathrm{In}(x, F) = 
  \begin{cases} 
    1 & \text{if } x \in {\uparrow}F, \\ 
    0 & \text{otherwise.} 
  \end{cases}
$$


\begin{algorithm}[!htbp]
\caption{Bounded-horizon approximation}
\label{alg:bounded-horizon}
\begin{algorithmic}
\Require $B$, $B'$, $\mathrm{In}$, $x \in S$, $\ell \in \Nat$,
$\varepsilon \in \mathbb{Q} \cap (0,1)$, $\mathrm{succ}$, $P$
\Ensure $p^\ast \in \mathbb{Q}$ such that
$p^\ast \leq \mathbb{P}_x(\tau_{ {\uparrow}B} \leq \ell) \leq p^\ast + \varepsilon$
\If{$\ell = 0$} \Return $\mathrm{In}(x, B)$ \EndIf
\State $p^\ast \gets 0$; \; $\gamma \gets \varepsilon / \ell$
\State $F \gets \{\, x \mapsto 1 \,\}$
  \Comment{frontier at depth $0$: a finite map from states to probabilities}
\For{$k = 0$ \textbf{to} $\ell - 1$}
  \State $F' \gets \emptyset$
  \ForAll{$(y, p) \in F$}
    \If{$\mathrm{In}(y, B) = 1$} \State $p^\ast \gets p^\ast + p$
    \ElsIf{$\mathrm{In}(y, B') = 1$}
      \State $q \gets 0$; \; $m \gets 1$
      \State \textbf{while} $q < 1 - \gamma$ \textbf{do}
      \State \quad $z \gets \mathrm{succ}(y, m)$; \;
      $q \gets q + P(y)(z)$; \; $m \gets m + 1$
      \State \quad $F'[z] \gets F'[z] + p \cdot P(y)(z)$
        \Comment{merge; absent keys read as $0$}
      \State \textbf{end while}
    \EndIf
  \EndFor
  \State $F \gets F'$
\EndFor
\ForAll{$(y, p) \in F$}
  \Comment{last level: no further expansion}
  \If{$\mathrm{In}(y, B) = 1$} \State $p^\ast \gets p^\ast + p$ \EndIf
\EndFor
\State \Return $p^\ast$
\end{algorithmic}
\end{algorithm}

\begin{lemma}[Termination]
\label{lem:bounded-horizon-termination}
Procedure~\ref{alg:bounded-horizon} terminates.
\end{lemma}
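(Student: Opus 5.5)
The plan is to show, by induction on the depth $k$, that every frontier $F$ produced by Procedure~\ref{alg:bounded-horizon} is a \emph{finite} map, and that processing each entry of a finite frontier takes finitely many steps. Since the outer loop runs exactly $\ell$ times and the final loop ranges over a finite map, this gives termination. The case $\ell = 0$ is immediate, because $\mathrm{In}(x,B)$ is computed by finitely many decidable comparisons with elements of $B$. In general, both $\mathrm{In}(\cdot,B)$ and $\mathrm{In}(\cdot,B')$ are decidable: $\le$ is decidable (Pred-Basis effectiveness), and $B$ and $B'$ are finite, with $B'$ computed beforehand by the backward coverability algorithm using Lemma~\ref{lem:finite-depth}.

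\textbf{Base case and finiteness of frontiers.} Initially $F=\{x\mapsto 1\}$ is finite. Assume $F$ is finite at depth $k$. The loop over $(y,p)\in F$ then has finitely many iterations. In each iteration, either nothing is added to $F'$, or the inner \textbf{while} loop runs. Each pass of that loop adds at most one key $z$ to $F'$. So if every inner loop halts after finitely many passes, $F'$ is finite, which gives the induction step.

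\textbf{Main obstacle: the inner while loop.} This loop is entered only when $\mathrm{In}(y,B')=1$, so $y\in\Pred^*({\uparrow}B)$.
\begin{itemize}
\item First I must rule out that $y$ is a deadlock. If $y\in{\uparrow}B$, the first branch was taken instead. If $y\notin{\uparrow}B$ but $y\in\Pred^*({\uparrow}B)$, then $y$ reaches ${\uparrow}B$ in at least one step, so $\Post(y)\neq\emptyset$. This matters because on a deadlock $\mathrm{succ}(y,1)$ would not terminate.
\item Next, since $\varepsilon\in(0,1)$ and $\ell\ge1$, we have $\gamma=\varepsilon/\ell\in\mathbb{Q}\cap(0,1)$.
\item Lemma~\ref{lem:approx-termination} then applies, with the adaptive precision of Remark~\ref{rem:precision}: the accumulated rational lower bounds $q$ exceed $1-\gamma$ after finitely many enumerated successors.
\item It remains to check that each call $\mathrm{succ}(y,m)$ made in the loop terminates. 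By property~(ii) of effective enumerability, a call at index $m$ can diverge only if indices $1,\dots,m-1$ already exhaust $\Post(y)$. In that case the total computed mass is at least $1-\gamma/4>1-\gamma$, so the loop would already have exited before that call. Hence every call actually made returns.
\end{itemize}
Each assignment $F'[z]\gets F'[z]+p\cdot P(y)(z)$ is a finite computation, because $=$ is decidable and $F'$ is a finite map.

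Combining these steps, each of the $\ell$ outer iterations does finitely much work on a finite frontier, and the final loop is finite, so Procedure~\ref{alg:bounded-horizon} terminates. I expect the delicate point to be the interplay just described: the deadlock exclusion together with the argument that the stopping criterion fires strictly before any non-terminating call to $\mathrm{succ}$.
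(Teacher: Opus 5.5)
Your proof is correct and follows the same route as the paper: an induction on the depth showing that each frontier is finite, the observation that states in $\Pred^*({\uparrow}B)\setminus{\uparrow}B$ are not deadlocks, and Lemma~\ref{lem:approx-termination} to bound each inner loop. Your explicit check that the stopping criterion fires before any out-of-range, possibly diverging, call to $\mathrm{succ}$ is a detail the paper leaves implicit, and it is a welcome addition.
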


\begin{proof}
If $\ell = 0$ the algorithm returns immediately, so assume $\ell \geq 1$.
The outer loop performs exactly $\ell$ iterations, so it suffices to show
that each of them terminates, which amounts to showing that every frontier
$F_k$ is finite. Note that $\mathrm{In}(y, B') = 1$ and $\mathrm{In}(y, B) = 0$ implies $\Post(y) \neq \emptyset$.

We proceed by induction on $k$. The frontier $F_0 = \{x \mapsto 1\}$ is a
singleton. Assume $F_k$ is finite and consider an entry $(y,p) \in F_k$. If
$\mathrm{In}(y,B) = 1$ or $\mathrm{In}(y,B') = 0$, the entry
contributes nothing to $F_{k+1}$. Otherwise, since $P(y) \in \Dist(S)$ has
partial sums converging to $1$ and $\gamma \in \mathbb{Q}\cap(0,1)$, there is a finite $m$ with $q > 1 - \gamma$ (for a fine enough precision, by Lemma~\ref{lem:approx-termination}). The inner loop thus performs finitely many
iterations and inserts finitely many keys into $F_{k+1}$; each iteration is
effective, as $\mathrm{succ}$ is computable and $P$ is computable.  The inner \textbf{for all} loop thus ranges over a finite set and
each of its iterations terminates, so $F_{k+1}$ is finite and computed in
finite time. The final pass over $F_\ell$ likewise terminates, $F_\ell$
being finite.

\end{proof}

Having established its termination, this procedure constitutes an algorithm. We now turn to the proof of its correctness.

\begin{theorem}
\label{thm:bounded-horizon-correctness}
Algorithm~\ref{alg:bounded-horizon}, combined with the backward coverability algorithm, effectively solves the AQCP for $\ell \in \Nat$.
\end{theorem}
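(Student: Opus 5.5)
We must show that the returned $p^\ast$ satisfies $p^\ast \le \mathbb{P}_x(\tau_{\uparrow B}\le \ell)\le p^\ast+\varepsilon$. Termination is already given by Lemma~\ref{lem:bounded-horizon-termination}, and the finite basis $B'$ of $\Pred^*(\uparrow B)$ is computed by the backward coverability algorithm, using Pred-Basis effectiveness and Lemma~\ref{lem:finite-depth}. Membership $\mathrm{In}$ is decidable because $\le$ is decidable. The case $\ell=0$ is immediate. For $\ell\ge 1$, I will argue by an invariant on the frontiers $F_k$.

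\textbf{Invariant.} For a finite run $\pi=y_0\cdots y_k$ with $y_0=x$, write $\mathbb{P}(\pi)=\prod_i P(y_i)(y_{i+1})$. Call $\pi$ \emph{live} if $y_i\notin{\uparrow}B$ and $y_i\in{\uparrow}B'$ for all $i<k$. I claim that, at the start of iteration $k$:
\begin{itemize}
\item $F_k(y)$ is at most the sum of $\mathbb{P}(\pi)$ over live runs of length $k$ ending in $y$, with equality if exact probabilities are used and no successor is pruned;
\item $p^\ast$ equals the retained mass of runs that hit ${\uparrow}B$ for the first time at depth $<k$.
\end{itemize}
Merging keys in $F'$ only adds contributions of distinct runs, so it is harmless.

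\textbf{Lower bound.} Since every counted contribution corresponds to a distinct set of runs whose first entrance into ${\uparrow}B$ occurs at depth $\le \ell$, and rational lower approximations of $P$ only decrease the terms (Remark~\ref{rem:precision}), we get $p^\ast\le \mathbb{P}_x(\tau_{\uparrow B}\le\ell)$. This follows from disjointness of the cylinder sets $\{X_0\cdots X_j=\pi\}$ for distinct first-hit prefixes.

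\textbf{Upper bound.} Decompose $\mathbb{P}_x(\tau_{\uparrow B}\le \ell)$ as the sum over first-hit runs of length $j\le\ell$. Such a run never passes through a state outside ${\uparrow}B'=\Pred^*({\uparrow}B)$: a state from which ${\uparrow}B$ is unreachable cannot lie on a run reaching ${\uparrow}B$. Therefore discarding states with $\mathrm{In}(y,B')=0$ loses no mass that matters, and the only loss comes from pruning and from approximating probabilities.

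Define the lost mass at level $k$ as the true probability of live prefixes that are not represented in $F_{k+1}$. Let $M_k$ be the true mass of live expanded prefixes at level $k$, so $M_k\le 1$. At an expanded node $(y,p)$, the loop stops once the computed sum $q$ exceeds $1-\gamma$. Because $q$ is a lower bound of the true sum of the enumerated probabilities, the true unenumerated mass is $<\gamma$. Per-term underestimation of $P$ adds further loss, which the adaptive precision of Lemma~\ref{lem:approx-termination} must keep inside the budget.

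\textbf{Main obstacle.} The delicate point is this precision bookkeeping. I would handle it by taking $\gamma=\varepsilon/\ell$ for pruning of the true mass, plus geometric per-term precisions $\gamma 2^{-(i+2)}$ weighted so that the total approximation error at level $k$ stays within the budget. If necessary, I would split the budget as $\gamma/2$ for pruning and $\gamma/2$ for approximation, noting that multiplicative errors along a path only decrease mass. Bounding the total loss at each level by $\gamma$ times the live mass $\le 1$ and summing over the $\ell$ levels gives total loss $\le \ell\gamma=\varepsilon$. Hence $\mathbb{P}_x(\tau_{\uparrow B}\le \ell)\le p^\ast+\varepsilon$.
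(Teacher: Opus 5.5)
Your overall structure is sound and is essentially the paper's accounting, written in terms of runs. The lower bound via disjoint first-hit cylinders and lower-bounding weights is fine. So is the observation that discarding states outside ${\uparrow}B'$ loses nothing. The upper bound is, as in the paper, a per-level loss of at most $\gamma$ telescoped over $\ell$ levels; the paper packages this as the potential $\Phi_k=p^\ast_k+\sum_{(y,p)\in F_k}p\,\mathbb{P}_y(\tau_U\le \ell-k)$ with $U={\uparrow}B$.

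The gap is in the step you flag as the main obstacle, and your proposed resolution does not work for the algorithm as stated. The algorithm fixes $\gamma=\varepsilon/\ell$ and stops the inner loop as soon as the sum $q$ of the \emph{rational lower bounds} reaches $1-\gamma$. You bound the true unenumerated mass by $\gamma$ and then add the underestimation error on top. With per-term precisions $\gamma 2^{-(i+2)}$ this only gives a loss of $\gamma+\gamma/4$ per level, hence a final error of $5\varepsilon/4$. Splitting the budget into $\gamma/2+\gamma/2$ would require stopping at $1-\gamma/2$, i.e.\ modifying the algorithm. Likewise, the remark that multiplicative errors along a path ``only decrease mass'' helps the lower bound, but for the upper bound it is exactly what has to be controlled; it bounds nothing by itself.

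The missing idea is that at each expanded node the two losses add up to exactly $1-q$, so no separate budget is needed. Take $y\notin U\cup\widetilde{U}$ at depth $k$, with enumerated successors $z_1,\dots,z_M$ (repetition-free, and $\Post(y)=\mathrm{supp}(P(y))$), lower bounds $a_i\le P(y)(z_i)$, and $g(z)=\mathbb{P}_z(\tau_U\le \ell-k-1)\in[0,1]$. The Markov property gives
\[
0\le \mathbb{P}_y(\tau_U\le \ell-k)-\sum_{i=1}^{M}a_i\,g(z_i)\le \sum_{z\notin\{z_1,\dots,z_M\}}P(y)(z)+\sum_{i=1}^{M}\bigl(P(y)(z_i)-a_i\bigr)=1-q\le\gamma .
\]
Charge this defect to the node with its computed weight $p$, as in the paper's potential. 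Then each step's underestimation is paid at the level where it occurs, together with the pruning at that node. The computed weights on a frontier sum to at most $1$, since the $a_i$ at each node sum to at most $1$. So the loss per level is at most $\gamma\sum_{(y,p)\in F_k}p\le\gamma$, with the algorithm unchanged. Telescoping then yields $\mathbb{P}_x(\tau_U\le\ell)\le p^\ast+\ell\gamma=p^\ast+\varepsilon$.
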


\begin{proof}
Let $U = {\uparrow}B$. In this proof, $P(y)(z)$ denotes the exact transition probability, and $a_z$
the rational lower bound that the algorithm actually uses — written
$P(y)(z)$ in the pseudo-code, see Remark~\ref{rem:precision}. We must make this distinction here to ensure that the approximations do not compromise the correctness of the algorithm. It is trivial for $\ell = 0$, so we assume from now on that $\ell \in \Nat_+$. At each depth $k \le \ell$, we define the potential:
$ \Phi_k = p^\ast_k + \sum_{(y,p) \in F_k} p\, \mathbb{P}_y(\tau_U \le \ell - k) $
where initially $\Phi_0 = \mathbb{P}_x(\tau_U \le \ell)$. We have for $k \le \ell$, $p^\ast_{k} \le \Phi_k$.

When transitioning from depth $k$ to $k+1$, we have $\mathbb{P}_y(\tau_U \le \ell - k) = 1$ for $y \in U$, and $\mathbb{P}_y(\tau_U \le \ell - k) = 0$ for $y \in \widetilde{U}$. For the remaining expanded states $y \notin U \cup \widetilde{U}$, the Markov property yields:
$ \mathbb{P}_y(\tau_U \le \ell - k) = \sum_z P(y)(z)\, \mathbb{P}_z(\tau_U \le \ell - k - 1) $

Let $z_1^{(y)}, \dots, z_{M_{y,k}}^{(y)}$ be the successors enumerated by the inner loop for state $y$ at depth $k$, with rational lower bounds $a_{z_i^{(y)}} \le P(y)(z_i^{(y)})$ (obtained with sufficient precision to guarantee termination according to Lemma~\ref{lem:approx-termination}) and total computed mass $q_{M_{y,k}} = \sum_{i=1}^{M_{y,k}} a_{z_i^{(y)}}$. 

By definition of the algorithm,
$p^\ast_{k+1} = p^\ast_k + \sum_{(y,p) \in F_k,\, y \in U} p$.
To handle state merging, let
$
  \mathcal{T}_k = \{\, (y,p,i) \mid (y,p) \in F_k,\
  y \notin U \cup \widetilde{U},\ 1 \leq i \leq M_{y,k} \,\}
$
be the set of triples indexing the insertions performed at depth $k$, and
for $z \in S$ let $\mathcal{A}(z,k) = \{\, (y,p,i) \in \mathcal{T}_k \mid
z_i^{(y)} = z \,\}$, where $z_1^{(y)}, \dots, z_{M_{y,k}}^{(y)}$ are the
successors enumerated from $y$ at depth $k$. By definition of the merge, the keys of
$F_{k+1}$ are the states $z$ with $\mathcal{A}(z,k) \neq \emptyset$, and the
associated probability is
$p' = \sum_{(y,p,i) \in \mathcal{A}(z,k)} p\,a_{z_i^{(y)}}$.

The sets $\mathcal{A}(z,k)$ partition $\mathcal{T}_k$, and the factor
$\mathbb{P}_z(\tau_U \le \ell - k - 1)$ depends on the triple only through
$z = z_i^{(y)}$. Reindexing the sum over $\mathcal{T}_k$ by $z$ on the one
hand and by $(y,p)$ on the other therefore gives
\begin{align*}
  \sum_{(z,p') \in F_{k+1}} p'\, \mathbb{P}_z(\tau_U \le \ell - k - 1)
  &= \sum_{(y,p,i) \in \mathcal{T}_k}
     p\,a_{z_i^{(y)}}\, \mathbb{P}_{z_i^{(y)}}(\tau_U \le \ell - k - 1) \\
  &= \sum_{(y,p) \in F_k,\, y \notin U \cup \widetilde{U}} p
     \sum_{i=1}^{M_{y,k}} a_{z_i^{(y)}}\,
     \mathbb{P}_{z_i^{(y)}}(\tau_U \le \ell - k - 1) .
\end{align*}

By splitting the sum in $\Phi_k$ over the states $y \in U$, $y \in \widetilde{U}$, and the remaining states, we have $\Phi_k = p^\ast_{k+1} + \sum_{(y,p) \in F_k, y \notin U \cup \widetilde{U}} p\, \mathbb{P}_y(\tau_U \le \ell - k)$. Subtracting $\Phi_{k+1}$ directly yields:
$$ \Phi_k - \Phi_{k+1} = \sum_{(y,p) \in F_k, y \notin U \cup \widetilde{U}} p \left( \mathbb{P}_y(\tau_U \le \ell - k) - \sum_{i=1}^{M_{y,k}} a_{z_i^{(y)}} \mathbb{P}_{z_i^{(y)}}(\tau_U \le \ell - k - 1) \right) $$

Since probabilities are bounded by 1 and the inner loop exit condition guarantees $q_{M_{y,k}} \ge 1 - \gamma$, the local error at state $y$ is bounded by:
$$ 0 \le \mathbb{P}_y(\tau_U \le \ell - k) - \sum_{i=1}^{M_{y,k}} a_{z_i^{(y)}} \mathbb{P}_{z_i^{(y)}}(\tau_U \le \ell - k - 1) \le 1 - q_{M_{y,k}} \le \gamma $$

Given that $\sum_{(y,p) \in F_k} p \le 1$, it follows from the bounds above that $0 \le \Phi_k - \Phi_{k+1} \le \gamma$. Hence, by telescoping over the $\ell$ steps,
$\Phi_0 - \Phi_\ell \leq \ell\gamma = \varepsilon$. Since
$\mathbb{P}_y(\tau_U \leq 0) = \mathbf{1}_U(y)$, the final pass over
$F_\ell$ gives $\Phi_\ell = p^\ast$. We conclude
$ p^* = \Phi_\ell \le \Phi_0 \le \Phi_\ell + \ell\gamma \le p^* + \varepsilon $, with $p^\ast \in \mathbb{Q}$.
\end{proof}

Algorithm~\ref{alg:bounded-horizon} computes the probability of reaching ${\uparrow}B$
within $\ell$ steps to any prescribed precision, without any decisiveness
assumption. This is one of its strengths: it
makes the algorithm applicable to every effective pWSTS. Notice that a similar algorithm computes a value $p^\ast \in \mathbb{Q}$ such that
$
  p^\ast \;\leq\; \mathbb{P}_x(\tau_{ {\uparrow}B\cup \widetilde{{\uparrow}B}} \leq \ell)
  \;\leq\; p^\ast + \varepsilon .
$
It suffices to replace $\mathrm{In}(y,B) = 1$ by
$\mathrm{In}(y,B) = 1 $ or $\mathrm{In}(y,B') = 0$, so that the
states of $\widetilde{{\uparrow}B}$ contribute to $p^\ast$ instead of being pruned.
Correctness is unaffected and the proofs of apply verbatim, neither of them using
the upward-closedness of the target.

Note that this algorithm also applies to the finitely branching case, where
for $\varepsilon$ small enough, it discards no probability mass.

\begin{proposition}
\label{prop:calcul-branchement-fini}
Assume for every $x \in S$, $\Post(x)$ is finite. Then, for $x \in S$ and $\ell \in \Nat$, there exists
$\epsilon_0 \in \mathbb{Q} \cap (0,1)$ such that for every
$\varepsilon \in \mathbb{Q} \cap (0, \epsilon_0)$,
Algorithm~\ref{alg:bounded-horizon} discards no probability mass during the computation steps.
\end{proposition}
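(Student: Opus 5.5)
The idea is to show that, for small enough $\varepsilon$, every inner while loop of Algorithm~\ref{alg:bounded-horizon} exhausts the whole of $\Post(y)$ before halting. Since the execution tree of depth $\ell$ from $x$ is finite when branching is finite, only finitely many states can ever be expanded. Hence a single threshold works uniformly for all of them.

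First, define $R=\Post^{\le \ell-1}(x)$. This is finite by induction on the depth, each $\Post(y)$ being finite. Every state $y$ expanded at some depth $k\le \ell-1$ lies in $R$, since frontier keys at depth $k$ are successors of keys at depth $k-1$. For each non-deadlock $y\in R$, let $\pi_y=\min_{z\in\Post(y)}P(y)(z)>0$; this is a minimum over a finite nonempty set of positive reals. Set $\pi=\min_{y\in R}\pi_y>0$, which is well defined because $R$ is finite; if no state of $R$ is expanded, any $\epsilon_0$ works. Choose a rational $\epsilon_0\in(0,1)$ with $\epsilon_0<\pi/2$. Since $\ell\ge 1$ (the case $\ell=0$ involves no expansion), every $\varepsilon<\epsilon_0$ gives $\gamma=\varepsilon/\ell\le\varepsilon<\pi/2$.

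Next comes the key step, which must account for the rational lower bounds $a_z$ of Remark~\ref{rem:precision}. Following the precision schedule of Lemma~\ref{lem:approx-termination}, the $i$-th term carries error at most $\gamma 2^{-(i+2)}$. Suppose the loop at $y$ stops after enumerating $z_1,\dots,z_M$ with $M<|\Post(y)|$. Then the true enumerated mass is at most $1-\pi_y$, so the computed sum $q_M\le 1-\pi_y<1-2\gamma<1-\gamma$. The exit condition $q\ge 1-\gamma$ therefore cannot yet hold, which is a contradiction. Hence $M=|\Post(y)|$, and no successor is discarded. I also need that the loop does terminate at $M=|\Post(y)|$ and never calls $\mathrm{succ}$ beyond the support, where it may not terminate. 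At that point the computed sum is at least $1-\gamma/4>1-\gamma$, so the loop exits exactly there. The only mass lost is the approximation slack of the $a_z$, not discarded successors.

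The main obstacle is this interplay. The threshold must separate the gap $\pi$ left by omitting any successor from the precision loss, and the exit must occur exactly upon full enumeration. Bounding the total approximation error by $\gamma/4$ and choosing $\gamma<\pi/2$ handles both. One subtlety is that $\epsilon_0$ need only exist, not be computed, so the non-effectiveness of $\pi$ is harmless.
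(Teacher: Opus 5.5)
Your proof is correct and follows the paper's argument: bound the transition probabilities from below over the finitely many states that can be expanded within the horizon, take $\varepsilon$ smaller than that bound, and observe that an inner loop stopping with a successor still unenumerated would have computed mass at most $1-\pi_y<1-\gamma$, which contradicts its exit condition. Your version is slightly more careful than the paper's, since you choose $\epsilon_0$ rational and below $1$, account for the rational lower bounds $a_z$, and check that the loop exits exactly at full enumeration (the factor $1/2$ in $\epsilon_0<\pi/2$ is harmless but unnecessary, because $\gamma<\pi$ already suffices).
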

\begin{proof}
If $\ell = 0$, the algorithm returns immediately, so the claim is immediate. Assume henceforth that $\ell \geq 1$. Since the model is finitely branching, $\Post^{\leq \ell}(x)$ is finite for
any $x \in S$, and so is the set of transitions
$\{P(y)(z): y \in \Post^{\leq \ell}(x),\ z \in \Post(y)\}$. All these
probabilities are strictly positive, so their minimum
$\epsilon_0:= \min\{P(y)(z): y \in \Post^{\leq \ell}(x),\ z \in
\Post(y)\}$ exists and satisfies $\epsilon_0 > 0$. Fix $\varepsilon \in
\mathbb{Q} \cap (0, \epsilon_0)$, so we have $\gamma = \varepsilon / \ell 
<\epsilon_0$. We claim that every inner loop of
Algorithm~\ref{alg:bounded-horizon} exits after
enumerating the whole (finite) support of $P(y)$. Suppose not: some state
$y \in \Post^{\leq \ell}(x)$ has its inner loop exit with $q_y > 1 - \gamma$
and at least one successor $z_m$ remains, with
$q_y + P(y)(z_m) \leq 1$, hence $q_y \leq 1 - P(y)(z_m)$. Combining with
$q_y > 1 - \gamma$ gives $1 - \gamma < 1 - P(y)(z_m)$, that is
$P(y)(z_m) < \gamma < \epsilon_0$. But $y \in \Post^{\leq \ell}(x)$ and
$z_m \in \Post(y)$, so $P(y)(z_m) \geq \epsilon_0$ by definition of
$\epsilon_0$, we have a contradiction. Every inner loop therefore enumerates the
entire support of $P(y)$, so no probability mass is ever discarded: the
algorithm sums the probabilities of all runs of length at most $\ell$
reaching $U$, and returns $\mathbb{P}_x(\tau_U \leq \ell)$.
\end{proof}

\subsection{Path-Enumeration With Infinite Horizon}

The principle is the same as the previous procedure, with one difference:
since the exploration depth is now unbounded, the truncation budget cannot
be constant. We take a sequence $(\gamma_k)$ summing to $\varepsilon/2$, so
that the total discarded mass stays below that value, and stop when the mass
remaining in the frontier drops below $\varepsilon/2$ as well. Decisiveness
guarantees that it eventually does, hence that the exploration depth is
finite.
As before, based on the backward algorithm, we define a computable function In such that for $x \in S, F \in \{B, B'\}$: $$
  \mathrm{In}(x, F) =
  \begin{cases}
    1 & \text{if } x \in {\uparrow}F, \\
    0 & \text{otherwise.}
  \end{cases}
$$

\begin{algorithm}[!htbp]
\caption{Infinite horizon approximation}
\label{alg:dynamic-truncation}
\begin{algorithmic}[1]
\Require $B$, $B'$, $\mathrm{In}$, $x \in S$,
$\varepsilon \in \mathbb{Q} \cap (0,1)$, $\mathrm{succ}$, $P$  
\Ensure $p^\ast \in \mathbb{Q}$ such that
$p^\ast \leq \mathbb{P}_x(\tau_{\uparrow B} < \infty) \leq p^\ast + \varepsilon$
\State $p^\ast \gets 0$; \; $k \gets 0$
\State $F \gets \{\, x \mapsto 1 \,\}$
  \Comment{frontier at depth $k$: a finite map from states to probabilities}
\While{$\sum_{(y,p) \in F} p > \varepsilon/2$}
  \State $\gamma_k \gets \varepsilon / 2^{\,k+2}$
    \Comment{$\sum_{k\ge0} \varepsilon\,2^{-(k+2)} = \varepsilon/2$:
    accumulated branching error bounded by $\varepsilon/2$}
  \State $F' \gets \emptyset$
  \ForAll{$(y, p) \in F$}
    \If{$\mathrm{In}(y, B) = 1$} \State $p^\ast \gets p^\ast + p$
    \ElsIf{$\mathrm{In}(y, B') = 1$}
      \State $q \gets 0$; \; $m \gets 1$
      \State \textbf{while} $q < 1 - \gamma_k$ \textbf{do}
      \State \quad $z \gets \mathrm{succ}(y, m)$; \;
      $q \gets q + P(y)(z)$; \; $m \gets m + 1$
      \State \quad $F'[z] \gets F'[z] + p \cdot P(y)(z)$
        \Comment{merge; absent keys read as $0$}
      \State \textbf{end while}
    \EndIf
  \EndFor
  \State $k \gets k + 1$
  \State $F \gets F'$
\EndWhile
\State \Return $p^\ast$
\end{algorithmic}
\end{algorithm}

We denote by $U$ the set $\uparrow B$.


\begin{proposition}[Termination]
\label{lem:dynamic-termination}
If the Markov Chain is decisive with respect to U, then Procedure~\ref{alg:dynamic-truncation} terminates.
\end{proposition}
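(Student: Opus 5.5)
Each single iteration of the outer \textbf{while} loop terminates: this follows exactly as in Lemma~\ref{lem:bounded-horizon-termination}, by induction on $k$, since each $F_k$ is finite and every inner loop stops by Lemma~\ref{lem:approx-termination} with $\gamma_k \in \mathbb{Q}\cap(0,1)$. Expanded states satisfy $\mathrm{In}(y,B)=0$ and $\mathrm{In}(y,B')=1$, so they are not deadlocks. Deadlocks outside $U$ lie in $\widetilde U$, because they are absorbing. It therefore remains to show that the guard $\sum_{(y,p)\in F_k} p > \varepsilon/2$ eventually fails. The guard uses rational lower bounds, which only makes it fail earlier, so it suffices to bound the exact masses.

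The key step is to dominate the frontier mass by a probability of the chain. I claim that for every $k$,
$$\sum_{(y,p)\in F_k} p \;\le\; \mathbb{P}_x\big(\tau_{U\cup\widetilde U} \ge k\big).$$
To prove it, I would strengthen it to a pointwise statement, by induction on $k$: for every state $z$,
$$F_k[z] \;\le\; \mathbb{P}_x\big(X_k = z,\ \tau_{U\cup\widetilde U} \ge k\big).$$
For $k=0$ both sides equal $\mathbf 1_{z=x}$. For the inductive step, $F_{k+1}[z]$ is a sum, over the expanded $y\in F_k$ with $y\notin U\cup\widetilde U$, of $F_k[y]\,a_z$, where $a_z\le P(y)(z)$ is the rational lower bound actually used, restricted to the enumerated $z$. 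The right-hand side equals
$$\sum_{y\notin U\cup\widetilde U}\mathbb{P}_x\big(X_k=y,\ \tau_{U\cup\widetilde U}\ge k\big)\,P(y)(z)$$
by the Markov property, since $\{\tau_{U\cup\widetilde U}\ge k+1\}=\{\tau_{U\cup\widetilde U}\ge k\}\cap\{X_k\notin U\cup\widetilde U\}$. The inequality then holds termwise. Summing over $z$ gives the claim, because all entries are non-negative.

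Decisiveness gives $\mathbb{P}_x(\tau_{U\cup\widetilde U}<\infty)=1$, so $\mathbb{P}_x(\tau_{U\cup\widetilde U}\ge k)$ decreases to $\mathbb{P}_x(\tau_{U\cup\widetilde U}=\infty)=0$ by continuity from above. Hence there is some $k$ at which the frontier mass is at most $\varepsilon/2$, and the loop exits there. The whole procedure then runs only finitely many iterations, each of them terminating.

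The main obstacle is the coupling step: justifying the pointwise domination carefully in the presence of merging and approximate lower bounds. The point is that pruning and under-approximation only remove mass and never add any. I would handle this with the same reindexing over the triples $\mathcal{T}_k$ used in the proof of Theorem~\ref{thm:bounded-horizon-correctness}.
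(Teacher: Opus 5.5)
Your proof is correct and follows the paper's route. You show that each iteration terminates because the frontiers are finite and Lemma~\ref{lem:approx-termination} applies to non-deadlock expanded states; you then dominate the frontier mass at depth $k$ by $\mathbb{P}_x(\tau_{U\cup\widetilde U}\ge k)$, which tends to $0$ by decisiveness. Your pointwise induction $F_k[z]\le\mathbb{P}_x(X_k=z,\ \tau_{U\cup\widetilde U}\ge k)$ is a more careful rendering of the paper's informal remark that merged entries correspond to disjoint run events and that lower bounds only remove mass.
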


\begin{proof}
We first check that every iteration of the main loop terminates, which
amounts to showing that every frontier $F_k$ is finite and 
computable. We proceed by
induction on $k$. The frontier $F_0 = \{x \mapsto 1\}$ is a singleton.
Assume $F_k$ is finite: an entry $(y,p) \in F_k$ with $\mathrm{In}(y,B) = 1$
or $\mathrm{In}(y,B') = 0$ contributes nothing to $F_{k+1}$, and
otherwise, by Lemma~\ref{lem:approx-termination} (since the state cannot be a deadlock one), its inner loop terminates
and inserts finitely many keys into $F_{k+1}$. The \textbf{for all} loop
thus ranges over a finite set and each of its iterations terminates, so
$F_{k+1}$ is finite and computed in finite time.

It remains to show that the main loop is left after finitely many
iterations. The chain is decisive with respect to $U$, so
$\mathbb{P}_x(\tau_{U \cup \widetilde{U}} < \infty) = 1$ therefore
$\lim _{n\to \infty}\mathbb{P}_x(\tau_{U \cup \widetilde{U}} > n) = 0$. Let $N \in \Nat$ be
such that $\mathbb{P}_x(\tau_{U \cup \widetilde{U}} > N) \leq
\varepsilon/2$. The entries of $F_{N+1}$ arise from runs $\pi = x_0...x_{N+1}$ along which
$x = x_0 \dots, x_N \notin U \cup \widetilde{U}$, and, summing the
probabilities of the runs reaching the same state at the same depth,
distinct entries correspond to disjoint events, so their accumulated
probabilities satisfy
$\sum_{(y,p) \in F_{N+1}} p \leq \mathbb{P}_x(\tau_{U \cup \widetilde{U}} >
N) \leq \varepsilon/2$. The condition of the main loop then fails at depth
$N+1$.
\end{proof}

Note that without the assumption of decisiveness, this procedure might not terminate. Consequently, in what follows, we treat it as a semi-algorithm. We now turn to establishing its correctness.

\begin{theorem}
\label{thm:dynamic-correctness}
If the effective pWSTS is decisive with respect to ${\uparrow}B$, Semi-algorithm~\ref{alg:dynamic-truncation}, combined with the backward coverability algorithm, effectively solves the AQCP for $\ell = \infty$.
\end{theorem}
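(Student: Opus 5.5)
Termination is already given by Proposition~\ref{lem:dynamic-termination}, so it remains to prove the two inequalities $p^\ast \le \mathbb{P}_x(\tau_U<\infty) \le p^\ast+\varepsilon$ for the value returned at the exit depth $N$. I would adapt the potential argument of Theorem~\ref{thm:bounded-horizon-correctness}, replacing the bounded-horizon probabilities by infinite-horizon ones. For each depth $k$ reached by the main loop, define
$$\Phi_k = p^\ast_k + \sum_{(y,p)\in F_k} p\,\mathbb{P}_y(\tau_U<\infty),$$
where $p^\ast_k$ is the value of $p^\ast$ when the frontier $F_k$ is built. Then $\Phi_0=\mathbb{P}_x(\tau_U<\infty)$. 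As before, I distinguish the exact probabilities $P(y)(z)$ from the rational lower bounds $a_z$ actually used, and I take these bounds precise enough for Lemma~\ref{lem:approx-termination} to apply with $\gamma_k$.

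For the one-step comparison $\Phi_k$ versus $\Phi_{k+1}$, I split $F_k$ into entries in $U$, entries in $\widetilde U$, and the remaining ones. Entries in $U$ have $\mathbb{P}_y(\tau_U<\infty)=1$ and are transferred to $p^\ast$. Entries in $\widetilde U$ have probability $0$ and are discarded. For the remaining entries, the Markov property gives $\mathbb{P}_y(\tau_U<\infty)=\sum_z P(y)(z)\,\mathbb{P}_z(\tau_U<\infty)$, since $y\notin U$. Reindexing the merges through the sets $\mathcal{T}_k$ and $\mathcal{A}(z,k)$ exactly as in Theorem~\ref{thm:bounded-horizon-correctness}, I then get
$$0\le \Phi_k-\Phi_{k+1} \le \sum_{(y,p)\in F_k} p\,(1-q_{M_{y,k}}) \le \gamma_k\sum_{(y,p)\in F_k}p \le \gamma_k.$$
Here the lower bound holds because $a_z\le P(y)(z)$ and all terms are nonnegative. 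The upper bound holds because the omitted exact mass plus the approximation loss equals $1-q\le\gamma_k$. The total frontier mass never exceeds $1$, by induction, since $\sum_z a_z\le 1$. Telescoping then yields $0\le \Phi_0-\Phi_N\le\sum_{k<N}\gamma_k\le \varepsilon/2$.

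At exit, the loop guard gives $\sum_{(y,p)\in F_N}p\le\varepsilon/2$. Bounding each $\mathbb{P}_y(\tau_U<\infty)$ by $1$ gives $p^\ast_N\le\Phi_N\le p^\ast_N+\varepsilon/2$, and $p^\ast=p^\ast_N$. Combining this with the telescoped bound gives $p^\ast\le\Phi_N\le\Phi_0$ and $\Phi_0\le\Phi_N+\varepsilon/2\le p^\ast+\varepsilon$. The value $p^\ast$ is rational because it is a finite sum of products of rational bounds. The backward coverability algorithm supplies $B'$ and makes $\mathrm{In}$ computable, which completes the proof.

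I expect the main obstacle to be bookkeeping rather than substance. One point is that the loop guard is evaluated on $F_k$ before processing, so entries already in $U$ that remain in the final frontier are not added to $p^\ast$. This is harmless, since they are absorbed into the $\varepsilon/2$ term of the last step. The other point is checking that the computed approximations do not break the inequality $0\le\Phi_k-\Phi_{k+1}$; this follows because only lower bounds are used.
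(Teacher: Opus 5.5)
Your proposal is correct and follows essentially the same route as the paper. You use the same potential $\Phi_k = p^\ast_k + \sum_{(y,p)\in F_k} p\,\mathbb{P}_y(\tau_U<\infty)$, the same three-way split of the frontier with the Markov property and the reindexing from Theorem~\ref{thm:bounded-horizon-correctness}, the per-step bound $0\le\Phi_k-\Phi_{k+1}\le\gamma_k$, and the telescoping combined with the exit condition $\sum_{(y,p)\in F_N}p\le\varepsilon/2$; your extra remarks (on unprocessed $U$-entries in the final frontier, and on lower bounds preserving $\Phi_k-\Phi_{k+1}\ge 0$) only make explicit points the paper leaves implicit.
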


\begin{proof}
For $y \in S$, we write $V(y) = \mathbb{P}_y(\tau_U < \infty)$.  As before, $P(y)(z)$ denotes the exact transition probability, and $a_z$
the rational lower bound that the algorithm actually uses, see Remark~\ref{rem:precision}. 

At each iteration $k \in \Nat$, let $F_k$ denote the frontier and $p^\ast_k$ the value of $p^\ast$ at the start of the iteration. We define the potential:
$ \Phi_k = p^\ast_k + \sum_{(y,p) \in F_k} p\, V(y) $
with initially $\Phi_0 = V(x) = \mathbb{P}_x(\tau_U < \infty)$.

During iteration $k$, consider a node $(y,p) \in F_k$. If $y \in U$, then $V(y) = 1$ and $p$ is added to $p^\ast$. If $y \in \widetilde{U}$, then $V(y) = 0$ and the node is discarded. Otherwise, $y \notin U \cup \widetilde{U}$ is expanded into enumerated successors $z_1, \dots, z_{M_y}$ with rational lower bounds  $0 \le a_{z_i} \le P(y)(z_i)$. As in the proof of Theorem~\ref{thm:bounded-horizon-correctness}, these lower bounds are computed with sufficient precision to guarantee termination (according to Lemma~\ref{lem:approx-termination}). The algorithm ensures that the evaluated probability mass $q_{M_y} = \sum_{i=1}^{M_y} a_{z_i}$ satisfies $q_{M_y} \ge 1 - \gamma_k$.

Following the exact same reasoning as in the proof of Theorem~\ref{thm:bounded-horizon-correctness}:
$$ \Phi_k - \Phi_{k+1} = \sum_{(y,p) \in F_k, y \notin U \cup \widetilde{U}} p \left( V(y) - \sum_{i=1}^{M_y} a_{z_i} V(z_i) \right) $$

By the Markov property, $V(y) = \sum_z P(y)(z)\, V(z)$ for any $y \notin U$. Since $V(z) \le 1$, the local error at state $y$ is bounded by:
$0 \le V(y) - \sum_{i=1}^{M_y} a_{z_i} V(z_i) \le 1 - q_{M_y} \le \gamma_k .$ Because the accumulated probabilities on the frontier always satisfy $\sum_{(y,p) \in F_k} p \le 1$, summing these local errors over $F_k$ bounds the total loss of potential per iteration:
$ 0 \le \Phi_k - \Phi_{k+1} \le \gamma_k $.

Let $K$ be the iteration at which the main loop exits. By telescoping the sequence $(\Phi_k)$, we obtain:
$$ p^\ast_K \le \Phi_K \le \Phi_0 \le \Phi_K + \sum_{k < K} \gamma_k \le p^\ast_K + \sum_{(y,p) \in F_K} p + \frac{\varepsilon}{2} \le p^\ast_K + \varepsilon $$
where the final two inequalities directly rely on $\sum_k\gamma_k \le \varepsilon / 2$ and the exit condition of the main loop ($\sum_{(y,p) \in F_K} p \le \frac{\varepsilon}{2}$). Moreover, $p^\ast_K \in \mathbb{Q}$ and according to Proposition~\ref{lem:dynamic-termination}, the procedure terminates under the decisiveness assumption, which yields the result.
\end{proof}

This semi-algorithm solves the approximate quantitative coverability problem over an infinite horizon, provided the upward closure of the target set is decisive. This holds regardless of the branching degree, finite or infinite, as soon
as the effectiveness assumptions and decisiveness are satisfied, and the
only probabilities the model has to compute are those of individual
transitions.

From these algorithms, we draw three remarks.

\begin{remark}[Merging states]
An optimization used in both approximation procedures is state merging. Following \cite{BarbotBouyerHaddad2024}, our algorithms merge distinct runs reaching the same state at the same depth by summing their probabilities. Standard breadth-first path-enumeration algorithms maintain one entry per path \cite{IyerNarasimha1997,AbdullaBenHendaMayr2007}, leading to an exponential or even factorial growth of the frontier (depending on $P$). Merging mitigates this combinatorial explosion by bounding the frontier size by the number of \emph{distinct} states reached at depth $k$.
\end{remark}

\begin{remark}[Choice of the budget allocation]
\label{rem:budget-allocation}
In Semi-algorithm~\ref{alg:dynamic-truncation}, the sequence $(\gamma_k)$ of
per-level budgets can be any sequence of positive rationals with
$\sum_{k \geq 0} \gamma_k \leq \varepsilon/2$; the geometric choice
$\gamma_k = \varepsilon\,2^{-k-2}$ is only one possibility. The speed at
which $(\gamma_k)$ decreases has a concrete effect, because a smaller
$\gamma_k$ forces the inner loop at depth $k$ to enumerate more successors
before the residual mass falls below it. A fast-decreasing allocation, such as the geometric one, leaves a
comfortable budget at shallow levels but a very small one deep down, where
$\gamma_k$ becomes so small that distinguishing $1 - \gamma_k$ from $1$
requires many bits of precision. A slow-decreasing allocation, such as
$\gamma_k \propto 1/k^2$, keeps $\gamma_k$ larger at depth and avoids this,
but since the total budget is fixed, it must then be tighter at the shallow
levels, which are precisely the ones carrying most of the probability mass.
No allocation escapes this trade-off: a fixed budget spread over infinitely
many levels is always small somewhere.
\end{remark}

\begin{remark}[The missing a-priori bound]
\label{rem:gamma-comparison}
Decisiveness guarantees that Semi-algorithm~\ref{alg:dynamic-truncation}
terminates, but says nothing about the depth at which it does: no integer
$N$ with $\mathbb{P}_x(\tau_{U \cup \widetilde{U}} > N) \leq \varepsilon/2$
is known in general before the algorithm is run. The finest precision to which the transition probabilities have to be
computed is therefore unknown as well, since it is governed by $\gamma_k$. In particular, even when the branching
is finite, neither the time nor the space used by the algorithm can be
predicted in advance, both depending on how deep the exploration goes.
Knowing such an $N$ beforehand would remove this limitation. This
investigation is left for future work.
\end{remark}

Note also that neither Algorithm~\ref{alg:bounded-horizon} nor
Semi-algorithm~\ref{alg:dynamic-truncation} guarantees $p^\ast > 0$. If the
probability to be estimated is $10^{-2}$ and one takes
$\varepsilon = 10^{-1}$, then $p^\ast = 0$ is admissible. Algorithm~\ref{alg:witness-path} fills this gap. Indeed, obtaining a strictly positive lower bound is much more straightforward with this approach: since it isolates a specific run, we have direct access to its computable real probability. By approximating this single probability with sufficient precision, the algorithm readily returns a strictly positive lower bound $p_0$. In contrast, ensuring $p^\ast > 0$ with the other procedures would require repeatedly restarting them with smaller values of $\varepsilon$. Running the latter with
$\varepsilon = \alpha p_0$, for $\alpha \in \mathbb{Q} \cap (0,1)$, then
bounds the error by a fraction $\alpha$ of the actual probability, since
$\alpha p_0 \leq \alpha\,\mathbb{P}_x(\tau_U < \infty)$.

\section{Application}
\label{sec:applications}

We instantiate our framework on multi-type Galton--Watson processes, a
natively probabilistic model. It does not arise from a WSTS later equipped
with probabilities, but is defined as a stochastic process whose underlying
transition system turns out to be a WSTS. We show that these processes are
pWSTS, that they are effective under several assumptions on their
reproduction laws, and that they are decisive with respect to every
upward-closed set.

\subsection{Galton--Watson Process as a pWSTS}

\begin{example}[Two-strain epidemic with a capacity threshold]
\label{ex:mtgw-strains}
Consider a pathogen circulating in two forms: type~$1$ denotes hosts
carrying the wild strain, and type~$2$ hosts carrying a resistant strain.
Each infected host of type~$i$ (row) independently infects $\mathcal{P}(m_{ij})$
(a Poisson distribution with parameter $m_{ij} \in \mathbb{Q}$) hosts of type~$j$ (column) in the
next generation, with mean matrix
$$
  M = \begin{pmatrix} 0.4 & 0.2 \\ 0.6 & 0.8 \end{pmatrix} ,
$$
mutation at transmission and reversion of the resistant strain making both
types feed each other. Resistant cases, unlike wild-strain ones, require hospital care, of which
$N \geq 1$ beds are available. An outbreak is deemed severe when capacity is
overrun, that is when the population enters
  $U = \{ z \in \Nat^2: z_2 \geq N \}$,
an upward-closed set. Starting from a single wild-strain case $x = (1,0)$, the quantity of
interest is $\mathbb{P}_x(\tau_U < \infty)$. Extinction
at the very first generation already occurs with probability
$e^{-0.4} \times e^{-0.2} = e^{-0.6}$, so
$\mathbb{P}_x(\tau_U < \infty) \leq 1 - e^{-0.6} < 1$, while $U$ is
reachable from $x$ with positive probability.

By stability of the Poisson law under summation, the kernel is available:
$
  Z_1 \mid Z_0 = z \;\sim\;
  \mathcal{P}(0.4\,z_1 + 0.6\,z_2) \otimes \mathcal{P}(0.2\,z_1 + 0.8\,z_2) ,
$
so $\Post(z)$ is infinite for every $z \neq 0$ and standard
path-enumeration algorithms
\cite{IyerNarasimha1997,AbdullaBenHendaMayr2007} do not apply. 
\end{example}

This example is formally modeled as a multi-type Galton--Watson process. 

\begin{definition}
Let $d \in \Nat_+$ and for each $i \in \{1,...,d\}$, let $\mu^{(i)}$ be a probability distribution on $\Nat^d$. Let $(Z_n)_{n \in \Nat}$ in $\Nat^d$ be a stochastic process. $(Z_n)_{n \in \Nat}$
is a \emph{multi-type Galton--Watson process} associated with the family of laws of reproduction $\mu = (\mu^{(i)})_{1 \leq i \leq d}$ if:
$$
\begin{cases}
Z_0 = z_0 \in \Nat^d, \\[4pt]
Z_{n+1}^{(j)} = \displaystyle\sum_{i=1}^{d} \left( \sum_{k=1}^{Z_n^{(i)}} Y_{n,k}^{(i,j)} \right) \quad \text{for all } j \in \{1,..,d\} \text{ and } n \in \Nat,
\end{cases}
$$
where $(Y_{n,k}^{(i)})_{(n,k,i) \in \Nat\times\Nat_+ \times \{1,..,d\}}$ is a family of independent random vectors such that $Y_{n,k}^{(i)} = (Y_{n,k}^{(i,1)}, \ldots, Y_{n,k}^{(i,d)})$ is distributed according to $\mu^{(i)}$ for all $n\in \Nat$, $k \in \Nat_+$ and $i \in \{1,..,d\}$.

\end{definition}

The vector $Y_{n,k}^{(i)}$ represents the number of descendants of the $k$-th individual of type $i$ in the $n$-th generation, where each component $Y_{n,k}^{(i,j)}$ denotes the number of offspring of type $j$; the vector $Z_n = (Z_n^{(1)}, \ldots, Z_n^{(d)})$ represents the composition of the population at generation $n$, where $Z_n^{(j)}$ is the total number of individuals of type $j$.

\begin{remark}
Multi-type Galton--Watson processes follow synchronous semantics, every
individual reproducing at each generation, whereas stochastic models of
concurrent systems usually rely on asynchronous scheduling. A prominent
example is the probabilistic Basic Parallel Process (pBPP) model, where
exactly one process is selected at each step \cite{BonnetKieferLin2014}.
The two models differ on two points. First, the induced Markov chain is not that of a Galton--Watson process. Second, a pBPP is finitely branching, having at most as many
successors as it has rules, whereas a Galton--Watson process is infinitely
branching as soon as one reproduction law has infinite support.
Finally, Bonnet et al. \cite{BonnetKieferLin2014} establish the decidability of almost-sure coverability for
pBPPs, which is a qualitative question. We address the quantitative one,
giving effectiveness conditions under which the probability itself can be
approximated.
\end{remark}

The transition system is defined by $S = \Nat^d$ and, for $x, y \in \Nat^d$,
$x \to y$ iff $\mathbb{P}_x(Z_1 = y) > 0$. We equip $S$ with the
componentwise order.

\begin{restatable}{proposition}{gwPwsts}
\label{prop:gw-pwsts}
The system $(S, \leq, \to, P)$ associated with a multi-type Galton--Watson
process is a pWSTS.
\end{restatable}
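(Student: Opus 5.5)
We must check four things: $S$ is countable, $\leq$ is a wqo, the kernel $P(x)(y)=\mathbb{P}_x(Z_1=y)$ is compatible with $\to$ in the sense of Definition~\ref{def:pWSTS}, and $\to$ is strongly monotone. Countability of $\Nat^d$ is immediate. That the componentwise order on $\Nat^d$ is a wqo is Dickson's lemma. For the kernel, $P(x)$ is the law of $Z_1$ given $Z_0=x$, which is a probability distribution on $\Nat^d$, so $P$ is a Markov kernel.

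Compatibility at non-deadlock states holds by the definition $x\to y \iff P(x)(y)>0$. Deadlocks must be handled separately, and there are none. Indeed, $P(x)$ is a probability distribution on the countable set $\Nat^d$, so some $y$ has $P(x)(y)>0$ and $\Post(x)\neq\emptyset$ for every $x$. The second clause of the definition is therefore vacuous. The state $0$ is not a deadlock in this sense: it satisfies $0\to 0$, since an empty sum gives $Z_1=0$ almost surely, so $P(0)(0)=1$ is consistent with both clauses anyway.

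The main step is strong monotonicity. Let $x\le x'$ and $x\to y$, and write $x'=x+r$ with $r\in\Nat^d$. The key is the branching property. Given $Z_0=x'$, we split the $x'^{(i)}$ individuals of type $i$ into the first $x^{(i)}$ and the remaining $r^{(i)}$. By independence of the family $(Y^{(i)}_{0,k})$, the next generation is $Z_1 = A + C$, where $A$ is distributed as $Z_1$ under $\mathbb{P}_x$, $C$ is distributed as $Z_1$ under $\mathbb{P}_r$, and $A$ and $C$ are independent. Since $C$ takes values in $\Nat^d$ with total mass one, pick any $c$ with $\mathbb{P}_r(Z_1=c)>0$; such a $c$ exists by the same argument as in the deadlock discussion, with $c=0$ if $r=0$. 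Then
\[
\mathbb{P}_{x'}(Z_1=y+c)\;\ge\;\mathbb{P}(A=y)\,\mathbb{P}(C=c)\;=\;P(x)(y)\,P(r)(c)\;>\;0,
\]
so $x'\to y+c$ with $y\le y+c$. This is where I expect care to be needed: the decomposition $Z_1=A+C$ with independent summands has to be stated precisely from the definition. It follows by regrouping the double sum $\sum_i\sum_{k\le x'^{(i)}}Y^{(i,\cdot)}_{0,k}$ into $k\le x^{(i)}$ and $x^{(i)}<k\le x'^{(i)}$; the two families are disjoint, hence independent, and each has the right law.

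Combining these points, $(\Nat^d,\le,\to)$ is a WSTS and $P$ is compatible with $\to$, hence the system is a pWSTS. The same coupling of $A$ and $C$ will also be what later yields stochastic monotonicity, since $A+C\ge A$ pointwise.
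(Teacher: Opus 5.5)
Your proof is correct and follows the paper's route. Both use Dickson's lemma for the wqo. Both get strong monotonicity by letting the extra $x'-x$ individuals produce some positive-probability offspring, which gives $P(x')(y') \geq P(x)(y)\cdot P(x'-x)(c) > 0$ with $y' = y + c \geq y$.

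The differences are cosmetic. You package the extra individuals' offspring as an independent summand $C$ with law $P(x'-x)$ via the branching property, whereas the paper fixes one support vector $\underline{v}^{(i)}$ per type. You also check explicitly that no state is a deadlock, which the paper leaves implicit.
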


\begin{proof}
    The transition system is defined by $S = \Nat^d$ and, for $x, y \in \Nat^d$,
$x \to y$ iff $\mathbb{P}_x(Z_1 = y) > 0$. We equip $S$ with the
componentwise order, which is a wqo by Dickson's lemma. Strong monotonicity holds. Indeed, let $x,x',y \in \Nat^d$ and assume $x \leq x'$ in $\Nat^d$ and $x \to y$, so $P(x)(y) > 0$. There exist vectors
$v^{(i)}_k \in \mathrm{supp}(\mu^{(i)})$, for $1 \leq i \leq d$ and
$1 \leq k \leq x_i$, such that
$y = \sum_{i=1}^{d} \sum_{k=1}^{x_i} v^{(i)}_k$. For each type $i$, fix an
arbitrary $\underline{v}^{(i)} \in \mathrm{supp}(\mu^{(i)})$ and set
$
  y' = y + \sum_{i=1}^{d} (x'_i - x_i)\,\underline{v}^{(i)} .
$
Assigning the offspring vectors $v^{(i)}_k$ to the individuals already
present in $x$ and $\underline{v}^{(i)}$ to the $x'_i - x_i$ additional
individuals of type $i$ is one way of producing $y'$ from $x'$, hence
$
  P(x')(y') \;\geq\;
  \Big(\prod_{i=1}^{d} \prod_{k=1}^{x_i} \mu^{(i)}\big(v^{(i)}_k\big)\Big)
  \cdot \prod_{i=1}^{d}
  \mu^{(i)}\big(\underline{v}^{(i)}\big)^{\,x'_i - x_i} \;>\; 0 ,
$
and $y' \geq y$ componentwise since each $\underline{v}^{(i)}$ has
non-negative entries. Thus $(S, \leq, \to)$ is a WSTS and, $\to$ being defined from $P$,
$(S, \leq, \to, P)$ is a pWSTS.
\end{proof}

\subsection{Effectiveness}

Under suitable assumptions on its reproduction laws, a multi-type Galton--Watson process is also an effective pWSTS. 

\begin{theorem} 
\label{theorem:Galton--Watson-effective}
    Let $(Z_n)_{n \in \Nat}$ be a multi-type Galton--Watson process  with the family of laws of reproduction $\mu = (\mu^{(i)})_{1 \leq i \leq d}$. Assume that for any $1 \leq i \leq d$:
    \begin{itemize}
        \item $\mu^{(i)}$ is computable 
        \item membership in $\mathrm{supp}(\mu^{(i)})$ is decidable
        \item determining for any $v$ whether there exists $v' \in \mathrm{supp}(\mu^{(i)})$ such that $v' \ge v$ is decidable.
    \end{itemize} 
    Then, the underlying pWSTS $(S, \le, \to, P)$ is effective. 
\end{theorem}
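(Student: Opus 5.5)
The plan is to check, one by one, the five items in the definition of an effective pWSTS for $(S,\le,\to,P)$ with $S=\Nat^d$. Proposition~\ref{prop:gw-pwsts} already shows that this is a pWSTS. The orders $=$ and $\le$ on $\Nat^d$ are trivially decidable. The remaining items are decidability of $\to$, computability of $P$, effective enumerability of $\Post$, and computability of a basis of $\Pred(\uparrow B)$. Everything rests on one finiteness observation. If $y=\sum_{i}\sum_{k\le x_i} v^{(i)}_k$ with non-negative vectors, then every summand satisfies $v^{(i)}_k\le y$. Hence the set of \emph{decompositions} of $y$ from $x$ (families $(v^{(i)}_k)$ with $v^{(i)}_k\le y$ summing to $y$) is finite and computable.

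\textbf{Transitions and probabilities.} To decide $x\to y$, I enumerate the finitely many decompositions of $y$ from $x$. For each one, I test $v^{(i)}_k\in\mathrm{supp}(\mu^{(i)})$ using the second assumption; then $x\to y$ holds iff some decomposition passes all tests. For $P$, independence gives
\[
P(x)(y)=\sum_{\text{decompositions}}\ \prod_{i=1}^d\prod_{k=1}^{x_i}\mu^{(i)}\big(v^{(i)}_k\big).
\]
This is a finite sum of finite products of computable reals (first assumption), hence computable, with the usual splitting of the precision among terms. For $\Post$, I enumerate $\Nat^d$ in a fixed order (say by increasing $1$-norm) and output those $y$ with $x\to y$, which is decidable. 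The call with index $m$ searches for the $m$-th successor and diverges only if fewer than $m$ successors exist. This is exactly condition (ii), and the enumeration is repetition-free.

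\textbf{Pred-Basis.} Since $\Pred(\uparrow B)=\bigcup_{b\in B}\Pred(\uparrow b)$, it suffices to treat a single $b$. Here $x\in\Pred(\uparrow b)$ iff one can pick $v^{(i)}_k\in\mathrm{supp}(\mu^{(i)})$ with $\sum v^{(i)}_k\ge b$. I proceed in two steps.

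First, a size bound: every minimal element $x$ of $\Pred(\uparrow b)$ satisfies $|x|_1\le|b|_1$. Given a witness family, for each coordinate $j$ I keep a minimal set of individuals whose $j$-th entries already sum to at least $b_j$; it has at most $b_j$ elements. Let $x'\le x$ count the individuals in the union of these sets. Then $x'$ still belongs to $\Pred(\uparrow b)$, and $|x'|_1\le|b|_1$.

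Second, membership is decidable for each of the finitely many $x$ with $|x|_1\le|b|_1$. It suffices to guess a split $b\le\sum_{i,k}u^{(i)}_k$ with $u^{(i)}_k\le b$ (finitely many choices). For each part, I ask via the third assumption whether some $v'\in\mathrm{supp}(\mu^{(i)})$ satisfies $v'\ge u^{(i)}_k$. Conversely, from an actual witness, taking $u^{(i)}_k=\min(v^{(i)}_k,b)$ componentwise yields an admissible split, so the test is exact. The basis is then the set of minimal members among these candidates, computed using decidable $\le$.

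I expect the main obstacle to be this Pred-Basis step, in particular the pruning argument for the size bound. One has to make sure that dropping individuals keeps the covering of every coordinate at once. Handling coordinates separately and taking the union of the retained sets is the point to write carefully. The reduction from "a vector of the support above a residual" to the third assumption is the other delicate point, but the truncation $\min(v,b)$ should settle it.
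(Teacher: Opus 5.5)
Your proposal is correct and follows essentially the same route as the paper: the bound $v^{(i)}_k \le y$ to decide $\to$ and compute $P$ as a finite sum of products, the enumeration of $\Nat^d$ to enumerate $\Post$, and for Pred-Basis the bound $\|x\|_1 \le \|b\|_1$ on minimal predecessors, a membership test via the third assumption on contributions bounded by $b$, and extraction of the minimal candidates. The differences are cosmetic: you get the size bound from per-coordinate inclusion-minimal covering sets rather than from exact contributions $c^{(i)}_k \le v'^{(i)}_k$ summing to $b$, you use the truncation $\min(v,b)$ instead of the sets $V_i(b)$, and you make explicit the reduction from a finite set $B$ to a single $b$, which the paper leaves implicit.
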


\begin{proof}

    \begin{itemize}
        \item \textbf{Decidable ordering and equality:} the state set is $S = \mathbb{N}^d$. The componentwise order $\le$ and equality $=$ on vectors of integers are decidable.
        
        \item \textbf{Pred-Basis effectivity:} for any target $b \in \mathbb{N}^d$, we show that a finite basis for $\Pred(\uparrow b)$ is computable. If $b = 0_{\Nat^d}$, then $\uparrow b = \Nat^d$, so $\Pred(\uparrow b) = \Nat^d$ (by monotonicity of $\to$, since $0 \to 0$). Thus, we output in this case the finite basis: $\{0_{\Nat^d}\}$. Otherwise, we proceed in three steps:

        \textbf{\textit{1. Combinatorial bound on minimal predecessors.}}
        Let $x \in \Pred(\uparrow b)$ (assuming $\Pred(\uparrow b) \neq \emptyset$) be a minimal predecessor. By definition, there exists a realization of reproduction vectors $v'_k{}^{(i)} \in \mathrm{supp}(\mu^{(i)})$ (for $1 \le i \le d$ and $1 \le k \le x_i$) such that $\sum_{i=1}^{d} \sum_{k=1}^{x_i} v'_k{}^{(i)} \ge b$. 
        We can extract contributions $0_{\Nat^d}\le c_k^{(i)} \le v'_k{}^{(i)}$ such that the sum exactly covers $b$:
        $$ \sum_{i=1}^{d} \sum_{k=1}^{x_i} c_k^{(i)} = b $$
        Let $\mathcal{K} = \{ (i,k) \mid c_k^{(i)} \neq 0_{\Nat^d} \}$ be the set of strictly useful terms. Note that $\mathcal{K} \neq \emptyset$ because $b \neq 0_{\Nat^d}$. Applying the $\ell_1$-norm to the equation yields $\sum_{(i,k) \in \mathcal{K}} \Vert c_k^{(i)} \Vert_1 = \Vert b \Vert_1$. Since $c_k^{(i)} \neq 0_{\Nat^d}$ implies $\Vert c_k^{(i)} \Vert_1 \ge 1$, we obtain $|\mathcal{K}| \le \Vert b \Vert_1$.
        
        Let $x' \in \mathbb{N}^d$ be the configuration restricted to $\mathcal{K}$, defined for each type $i$ by $x'_i = |\{ k \mid (i,k) \in \mathcal{K} \}|$. By construction, $x' \le x$ and $\Vert x' \Vert_1 = |\mathcal{K}| \le \Vert b \Vert_1$. Furthermore, the individuals retained in $x'$ suffice to cover $b$: assigning
to them the same vectors $v'^{(i)}_k$ as in $x$, restricted to the indices
$(i,k) \in \mathcal{K}$, yields
$$
  \sum_{(i,k) \in \mathcal{K}} v'^{(i)}_k
  \;\geq\; \sum_{(i,k) \in \mathcal{K}} c^{(i)}_k
  \;=\; \sum_{i=1}^{d} \sum_{k=1}^{x_i} c^{(i)}_k \;=\; b ,
$$
the second equality holding since $c^{(i)}_k = 0_{\Nat^d}$ outside
$\mathcal{K}$. This realization has positive probability, so
$x' \in \Pred(\uparrow b)$. Since $x$ is assumed to be a minimal predecessor, $x' \le x$ implies $x = x'$. Thus, any minimal predecessor is contained in the finite set $\mathcal{X}_b = \{ x \in \mathbb{N}^d \mid \Vert x \Vert_1 \le \Vert b \Vert_1 \}$. However, some states in $\mathcal{X}_b$ may not belong to $\Pred(\uparrow b)$. We therefore need to establish a sufficient and necessary condition for membership in $\Pred(\uparrow {b}).$

            \textbf{\textit{2. Finite space of useful contributions.}}
We define the finite set of contributions that an individual of type $i$ can ensure for the target $b$:
$$ V_i(b) = \{ v \in \mathbb{N}^d \setminus \{0_{\Nat^d}\} \mid v \le b \text{ and } \exists v' \in \mathrm{supp}(\mu^{(i)}), v' \ge v \} $$
Since $v$ is bounded by $b$ and by the third assumption, $V_i(b)$ is finite and computable. This provides a sufficient and necessary condition for membership in $\Pred(\uparrow b)$: if for $x \in \Nat^d$, there exists a combination of contributions $v_k^{(i)} \in V_i(b) \cup \{0_{\Nat^d}\}$ (for $1 \le i \le d$ and $1 \le k \le x_i$) such that $\sum_{i=1}^{d} \sum_{k=1}^{x_i} v_k^{(i)} \ge b$, then the probability of reaching $\uparrow b$ from $x$ is strictly positive, meaning $x \in \Pred(\uparrow b)$. Conversely, if $x \in \Pred(\uparrow b)$, then one can extract contributions $0_{\Nat^d} \le c_k^{(i)} \le v'_k{}^{(i)}$ such that the sum exactly covers $b$:
$ \sum_{i=1}^{d} \sum_{k=1}^{x_i} c_k^{(i)} = b $. For any strictly positive contribution, $c_k^{(i)} \in V_i(b)$. Hence, $c_k^{(i)} \in V_i(b) \cup \{0_{\Nat^d}\}$ for every $(i,k)$, and $\sum_{i,k} c_k^{(i)} = b \ge b$, so the condition holds. 
Note that the sets $V_i(b)$ may be empty. In particular, if $V_i(b) = \emptyset$ for all $1 \le i \le d$, since $b \neq 0_{\Nat^d}$, it yields $\Pred(\uparrow b) = \emptyset$.

\textbf{\textit{3. Computation of $\mathrm{cpre}(b)$.}}
We initialise $\mathcal{P}_b = \emptyset$ and, for each candidate
$x \in \mathcal{X}_b$, test whether some family
$(v^{(i)}_k)_{i,k} \in \prod_{1 \le i \le d} (V_i(b) \cup \{0_{\Nat^d}\})^{x_i}$
satisfies $\sum_{i,k} v^{(i)}_k \ge b$; if so, $x$ is added to
$\mathcal{P}_b$. Each such product is finite, so the test is decidable by
exhaustive enumeration, and $\mathcal{X}_b$ is finite and computable, thus the
whole procedure terminates. By Step~2, $\mathcal{P}_b =
\Pred(\uparrow b) \cap \mathcal{X}_b$, and by Step~1 every minimal
element of $\Pred(\uparrow b)$ lies in $\mathcal{X}_b$, hence in
$\mathcal{P}_b$. We get
$$ \mathrm{cpre}(b) = \min\nolimits_{\le}(\mathcal{P}_b)
   = \min\nolimits_{\le}(\Pred(\uparrow b)) , $$
a finite basis of the upward-closed set $\Pred(\uparrow b)$. In
particular $\mathcal{P}_b = \emptyset$ exactly when
$\Pred(\uparrow b) = \emptyset$, in which case the empty basis is
returned (since $b \neq 0_{\Nat^d}$).
 \item \textbf{Decidable $\to$:} As a corollary of the bounding logic applied in Step 2, the exact one-step transition relation $x \to y$ is decidable. By definition, a transition $x \to y$ occurs if and only if the configuration $y$ can be generated by the individuals in $x$, meaning $y = \sum_{i=1}^{d} \sum_{k=1}^{x_i} v'_k{}^{(i)}$ for some $v'_k{}^{(i)} \in \mathrm{supp}(\mu^{(i)})$. Since all reproduction vectors are non-negative, any vector contributing to this sum must satisfy $v'_k{}^{(i)} \le y$. Because membership in the supports is assumed to be decidable, one can compute the finite set of possible exact contributions for a target $y$:
$$ W_i(y) = \{ v \in \mathbb{N}^d \mid v \le y \text{ and } v \in \mathrm{supp}(\mu^{(i)}) \} $$
Deciding whether $x \to y$ reduces to checking if there exists an element of $\prod_{1 \le i \le d} (W_i(y))^{x_i}$ whose components sum to $y$. Since the sets $W_i(y)$ are finite, there is a finite number of combinations to enumerate, making the transition relation decidable.

            \item \textbf{Computable transition probabilities:} For any $x, y \in \mathbb{N}^d$, the transition probability $P(x)(y)$ is given by the discrete convolution of the reproduction laws $\mu^{(i)}$. Building on the logic of the previous step, this probability corresponds exactly to the sum of the probabilities of all valid combinations in $\prod_{1 \le i \le d} (W_i(y))^{x_i}$ whose components sum to $y$. The sum is ranging over families indexed by individuals, so that permutations of a same multiset of reproduction vectors are counted separately. If no such combination exists (i.e., $x \not\to y$), the probability is simply $0$. Otherwise, since the laws $\mu^{(i)}$ are assumed to be computable, $P(x)(y)$ evaluates to a finite sum of finite products of computable reals. Thus, P is computable.
            
        \item \textbf{Post effective enumerability: } Let $\mathsf{gen} \colon \mathbb{N} \to \mathbb{N}^d$ be a computable bijection enumerating the state set (e.g., by increasing sum of components $\Vert \cdot \Vert_1$). Since $\to$ is decidable, given $x$ and $m$, we can compute $\mathrm{succ}(x,m)$ by iterating $k = 0, 1, 2, \dots$, checking if $x \to \mathsf{gen}(k)$ and incrementing a counter $K$. This algorithm terminates and returns the $m$-th successor if $m \leq |\Post(x)|$, but does not terminate otherwise, proving $\Post$ is effectively enumerable. 
    \end{itemize}  
   
\end{proof}

Many distributions satisfy these three assumptions, including the
multidimensional Poisson distribution. Note that the laws need not be
identical across types: each $\mu^{(i)}$ may follow a different law, the
assumptions being required of each separately.

\subsection{Decisiveness}

We now prove that multi-type Galton--Watson processes are decisive with respect to every upward-closed set by showing that they are stochastically monotone.

\begin{lemma}
\label{lem:gw-monotone}
A pWSTS induced by a multi-type Galton--Watson process is stochastically monotone.
\end{lemma}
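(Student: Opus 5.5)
The plan is a coupling argument. Fix $x \le x'$ in $\Nat^d$ and an upward-closed set $U \subseteq \Nat^d$. Following the definition of the multi-type Galton--Watson process, I realise both one-step transitions on a single probability space. Take one independent family $(Y_k^{(i)})_{1\le i\le d,\,k\ge 1}$ with $Y_k^{(i)}\sim\mu^{(i)}$, and set
\[
  Z = \sum_{i=1}^{d}\sum_{k=1}^{x_i} Y^{(i)}_k, \qquad
  Z' = \sum_{i=1}^{d}\sum_{k=1}^{x'_i} Y^{(i)}_k .
\]
By construction $Z$ has law $P(x)$ and $Z'$ has law $P(x')$. This is immediate from the definition of the process, because the law of $Z_1$ given $Z_0=z$ depends only on $z$ and on the independence and distribution of the offspring vectors, not on their indexing.

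The key observation is that $Z \le Z'$ holds pointwise. Indeed, $Z' - Z = \sum_{i}\sum_{k=x_i+1}^{x'_i} Y_k^{(i)}$, and this is a sum of vectors with non-negative entries, since $x_i \le x'_i$ for every $i$. So the event $\{Z\in U\}$ is contained in $\{Z'\in U\}$, because $U$ is upward-closed. Taking probabilities gives
\[
  P(x)(U)=\Prob(Z\in U)\le \Prob(Z'\in U)=P(x')(U),
\]
which is exactly Definition~\ref{def:stochastic-monotonicity}.

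The one point that needs care, and the only real obstacle, is the deadlock convention in Definition~\ref{def:pWSTS}. The kernel of the pWSTS must coincide with the Galton--Watson kernel, and a state $x$ is a deadlock only if $\Post(x)=\emptyset$. For Galton--Watson processes $\Post(x)$ is always nonempty: for $x = 0$ we have $0 \to 0$ with probability one, and for $x\neq 0$ any choice of offspring in the supports gives a successor. So there are no deadlocks, and $P$ is exactly the law of $Z_1$. The argument above then applies verbatim. Combining the lemma with Corollary~\ref{cor:pWSTS-decisive} immediately yields decisiveness with respect to every upward-closed set.
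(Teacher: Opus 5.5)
Your proposal is correct and follows the paper's proof: you use the same coupling through a single independent family of offspring vectors, which gives $Z \le Z'$ pointwise and hence $P(x)(U) \le P(x')(U)$ by upward-closedness of $U$. Your extra check that a Galton--Watson state is never a deadlock, so that the pWSTS kernel coincides with the process kernel, is correct; the paper leaves it implicit.
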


\begin{proof}
Let $x \leq x'$ in $\Nat^d$ and let $U \subseteq \Nat^d$ be upward-closed.
Let $(Y^{(i)}_k)_{1 \leq i \leq d,\, k \geq 1}$ be a family of independent
random vectors with $Y^{(i)}_k$ distributed according to $\mu^{(i)}$, and set
$$
  Z \;=\; \sum_{i=1}^{d} \sum_{k=1}^{x_i} Y^{(i)}_k ,
  \qquad
  Z' \;=\; \sum_{i=1}^{d} \sum_{k=1}^{x'_i} Y^{(i)}_k .
$$
By the definition of the process, $Z$ is distributed according to $P(x)$ and
$Z'$ according to $P(x')$. Since $x_i \leq x'_i$ for every $i$ and the
vectors $Y^{(i)}_k$ have non-negative entries, $Z'$ is obtained from $Z$ by
adding finitely many non-negative vectors, so $Z \leq Z'$ pointwise. As $U$
is upward-closed, $Z \in U$ implies $Z' \in U$, whence
$$
  P(x)(U) \;=\; \mathbb{P}(Z \in U)
  \;\leq\; \mathbb{P}(Z' \in U) \;=\; P(x')(U) . 
$$
\end{proof}

\begin{proposition}
    \label{cor:gw-decisive}
    Let $(Z_n)_{n \in \Nat}$ be a multi-type Galton--Watson process. For
every upward-closed $U \subseteq \Nat^d$, the process
is decisive with respect to $U$. 
\end{proposition}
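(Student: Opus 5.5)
The plan is to chain two results already established. By Proposition~\ref{prop:gw-pwsts}, the system $(S,\le,\to,P)$ induced by the multi-type Galton--Watson process $(Z_n)_{n\in\Nat}$ is a pWSTS. By Lemma~\ref{lem:gw-monotone}, this pWSTS is stochastically monotone in the sense of Definition~\ref{def:stochastic-monotonicity}. Corollary~\ref{cor:pWSTS-decisive} then states that every stochastically monotone pWSTS is decisive with respect to every upward-closed set, so decisiveness holds for every upward-closed $U\subseteq\Nat^d$, which is the claim.

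The one point requiring care is that the Markov chain to which decisiveness refers is the chain of the process $(Z_n)$ itself. This amounts to checking that the kernel $P(x)(y)=\mathbb{P}_x(Z_1=y)$ satisfies the pWSTS convention at deadlocks. Since $\to$ is defined by $P(x)(y)>0$, every state has at least one successor, because $P(x)$ is a probability distribution. Hence there are no deadlocks, the second clause of Definition~\ref{def:pWSTS} is vacuous, and $P$ is exactly the transition kernel of $(Z_n)$.

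I do not expect a real obstacle: the argument is a direct composition of Proposition~\ref{prop:gw-pwsts}, Lemma~\ref{lem:gw-monotone} and Corollary~\ref{cor:pWSTS-decisive}. If one prefers to see the mechanism explicitly, one can instead invoke Proposition~\ref{prop:boundedly-coarse-pWSTS} to obtain bounded coarseness towards $U$. Bounded coarseness gives global coarseness, and global coarseness gives decisiveness by the results of~\cite{AbdullaBenHendaMayr2007}, as recalled before Corollary~\ref{cor:pWSTS-decisive}. The case $U=\emptyset$ is trivial, since then $\widetilde U=S$.
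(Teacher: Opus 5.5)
Your proposal is correct and matches the paper's argument, which derives the statement immediately from Lemma~\ref{lem:gw-monotone} and Corollary~\ref{cor:pWSTS-decisive}, with Proposition~\ref{prop:gw-pwsts} used implicitly. Your extra check is also correct: since $\to$ is defined by $P(x)(y)>0$, there are no deadlocks, so the pWSTS kernel is exactly the transition kernel of $(Z_n)$.
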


\begin{proof}
    Immediate from Lemma~\ref{lem:gw-monotone} and Corollary~\ref{cor:pWSTS-decisive}.
\end{proof}

This proposition is worth noting for at least two reasons. First, it establishes a qualitative property of the Markov chain (decisiveness is a notion about Markov chains, not about WSTS), yet the proof goes entirely through the pWSTS formalism. Second, none of the usual tools for studying these processes is involved: the argument uses neither generating functions nor the regime of the process, and does not even require the reproduction laws to have finite means. Everything follows from viewing the process as a pWSTS, which illustrates the reach of our framework.

Returning to Example~\ref{ex:mtgw-strains}, the effectiveness requirements
are met by Theorem~\ref{theorem:Galton--Watson-effective}, and the process
is decisive with respect to $U = {\uparrow}(0,N)$ by
Proposition~\ref{cor:gw-decisive}. Thus,
Semi-algorithm~\ref{alg:dynamic-truncation} computes
$\mathbb{P}_x(\tau_U < \infty)$ up to any prescribed precision.

\begin{theorem}
\label{thm:gw-aqcp}
Let $(Z_n)_{n \in \Nat}$ be a multi-type Galton--Watson process satisfying
the effectiveness assumptions of Theorem~\ref{theorem:Galton--Watson-effective}. Then the
AQCP is effectively solved for every
$\ell \in \Nat \cup \{\infty\}$.
\end{theorem}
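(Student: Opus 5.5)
The proof is an assembly of earlier results: we check the hypotheses of the two path-enumeration results for multi-type Galton--Watson processes. First, by Proposition~\ref{prop:gw-pwsts} the process is a pWSTS. Under the stated assumptions on the reproduction laws, Theorem~\ref{theorem:Galton--Watson-effective} makes it an effective pWSTS. Hence we have decidable $\le$, $=$ and $\to$, Pred-Basis effectiveness, effective enumerability of $\Post$ through $\mathrm{succ}$, and computability of $P$. These are exactly the inputs required by Algorithm~\ref{alg:bounded-horizon} and Semi-algorithm~\ref{alg:dynamic-truncation}.

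Next we use Pred-Basis effectiveness and Lemma~\ref{lem:finite-depth} to run the backward coverability algorithm on a given finite basis $B$. This yields a finite basis $B'$ of $\Pred^*(\uparrow B)$. Membership in $\uparrow B$ and $\uparrow B'$ is then decidable, since it reduces to finitely many comparisons with $\le$, so the function $\mathrm{In}$ is computable. We then split on the horizon.
\begin{itemize}
\item For $\ell\in\Nat$, Lemma~\ref{lem:bounded-horizon-termination} and Theorem~\ref{thm:bounded-horizon-correctness} apply to any effective pWSTS without further hypotheses. They give a rational $p^\ast$ with $p^\ast\le \mathbb{P}_x(\tau_{\uparrow B}\le \ell)\le p^\ast+\varepsilon$.
\item For $\ell=\infty$, we need decisiveness with respect to $\uparrow B$. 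Lemma~\ref{lem:gw-monotone} shows the process is stochastically monotone, and Proposition~\ref{cor:gw-decisive}, via Corollary~\ref{cor:pWSTS-decisive}, shows it is decisive with respect to every upward-closed set, in particular $\uparrow B$. Proposition~\ref{lem:dynamic-termination} then guarantees termination. Theorem~\ref{thm:dynamic-correctness} gives correctness, so the semi-algorithm is an algorithm on this class.
\end{itemize}

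The only step needing care is a point that Lemma~\ref{lem:approx-termination} requires: expanded states must not be deadlocks. A state $y$ with $\mathrm{In}(y,B)=0$ and $\mathrm{In}(y,B')=1$ lies in $\Pred^*(\uparrow B)\setminus\uparrow B$, so it has a successor and is not a deadlock, as already observed in the termination proofs. For Galton--Watson processes deadlocks do not occur anyway, since every state has at least one successor. We should also check that the computable, possibly irrational, probabilities such as Poisson weights are covered; Remark~\ref{rem:precision} and the correctness proofs already handle rational lower approximations. No genuinely new argument is expected, so the main effort is bookkeeping that each hypothesis is discharged by the cited result.
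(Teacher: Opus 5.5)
Your proposal is correct and matches the paper's argument, which assembles exactly these ingredients. Effectiveness comes from Proposition~\ref{prop:gw-pwsts} and Theorem~\ref{theorem:Galton--Watson-effective}; finite horizons are handled by Theorem~\ref{thm:bounded-horizon-correctness}, and $\ell=\infty$ by decisiveness (Proposition~\ref{cor:gw-decisive}, via stochastic monotonicity) together with Proposition~\ref{lem:dynamic-termination} and Theorem~\ref{thm:dynamic-correctness}. Your extra check that Galton--Watson states are never deadlocks is accurate, since every state, including $0$ via $0\to 0$, has a successor, and it is harmless bookkeeping.
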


\section{Conclusion and open problems}
\label{sec:ccl}

We introduced probabilistic well-structured transition systems (pWSTS), a
framework for Markov chains whose underlying transition systems are WSTS. It
covers classical probabilistic models such as pVAS and pLCS, but also
infinitely branching systems and models that are probabilistic by
definition, such as multi-type Galton--Watson processes.

We then identified five natural effectiveness assumptions: decidability of
$=$ and of $\to$, Pred-Basis effectiveness, effective enumerability of
$\Post$, and computability of $P$. These relax the ones required for
effective Markov chains \cite{AbdullaBenHendaMayr2007}, where $\Post(x)$
must be computed explicitly, which confines the framework to finitely
branching models. Here, successors are enumerated only until the accumulated
probability mass bounds what is left unexplored, so effective enumerability
suffices and $\Post(x)$ may be infinite. The last assumption is also all
that is asked of the model in probabilistic terms: individual transition
probabilities, and not bounded-horizon ones, whose computation is far from
immediate and is deferred to the model in other approximation schemes
\cite{bertrand_et_al:LIPIcs.ICALP.2016.101}. These assumptions support three
procedures, none requiring knowledge of the branching degree or computable
bounded-horizon probabilities: Algorithm~\ref{alg:witness-path} computes a
minimum-length coverability run and its probability,
Algorithm~\ref{alg:bounded-horizon} solves approximate quantitative
coverability over a bounded horizon, and
Semi-algorithm~\ref{alg:dynamic-truncation} solves it over an infinite
horizon under decisiveness. Chaining the first as a lower bound $p_0 > 0$
with $\varepsilon = \alpha p_0$ in the others bounds the error by a fraction
$\alpha$ of the estimated quantity.

We also identified a general source of decisiveness: every stochastically
monotone pWSTS is decisive with respect to every upward-closed set. We
instantiated the framework on multi-type Galton--Watson processes, which are
pWSTS, effective as soon as their reproduction laws satisfy three
assumptions, and stochastically monotone. The approximate quantitative
coverability problem is thereby solved for these processes over both
horizons, which to the best of our knowledge had not been established
before. The proof uses none of the usual tools for branching processes: no
generating functions, no condition on the mean matrix, no moment assumption,
not even finiteness of the means, and no case distinction between the
subcritical, critical and supercritical regimes. The only work specific to
the model is to check the effectiveness assumptions, and they hold for many
usual reproduction laws, for example the multidimensional Poisson
distribution.

Several questions remain open. The first is to characterize the effective
decisive pWSTS for which, given an upward-closed target $U$ and an initial
state $x$, one can compute a decreasing function
$f_{U,x} \colon \mathbb{N} \to [0,1]$ converging to zero and satisfying
\[
  \mathbb{P}_x\!\left(\tau_{U \cup \widetilde U} > n\right)
  \leq f_{U,x}(n)
\]
for every $n$, together with an effectively computable threshold for each
precision $\varepsilon$. A second question is to determine when such bounds
can be made uniform over all initial states, or over broad classes of
targets. Answering these would clarify the boundary between qualitative
decisiveness and effective quantitative verification for probabilistic
infinite-state systems. A third direction is to develop a complexity
analysis of approximate quantitative coverability for effective pWSTS, and
to instantiate the resulting bounds on concrete subclasses.

\paragraph*{Use of AI.}
The authors take full responsibility for the entire content of this paper,
regardless of the assistance received from AI tools. The decision to address
the coverability problem by introducing a probabilistic class of WSTS is our
own, and was not suggested by an AI. The same holds for the choice of the
objects in their present form, and for the decision to include multi-type
Galton--Watson processes. The paper itself was written by the authors,
albeit with assistance.

We used large language models, namely Claude, Gemini, ChatGPT and
Perplexity, as follows. They helped explore various directions, some
fruitful and some not. The design of some examples, such as $S_1$ in
Example~\ref{ex:infinite-jumps}, and of some proofs, such as that of
Pred-Basis effectiveness (for Galton--Watson processes) and that of
Lemma~\ref{lem:monotone-reachability}, was AI-assisted. These tools also
helped gather a substantial body of related work on probabilistic
coverability, in particular for multi-type Galton--Watson processes, and
contributed to the \LaTeX{} writing of the statements and proofs. Some
proofs were reviewed by AI models, among them the proof of Pred-Basis
effectiveness for Galton--Watson processes. All proofs were checked by the
authors, AI review serving only as an additional verification. Finally, the
paper as a whole was proofread with AI assistance.

\bibliography{lipics-v2021-sample-article}

\end{document}